\newcommand{\cmark}{\ding{51}}%
\newcommand{\xmark}{\ding{55}}%
\newtheorem{theorem}{Theorem}
\newtheorem{lemma}{Lemma}
\newcommand*{\EXTENDED}{}
\title{Optimal Elephant Flow Detection}
\author{\IEEEauthorblockN{Ran Ben Basat}
	\IEEEauthorblockA{Computer Science Department\\
		Technion\\
		sran@cs.technion.ac.il
	}
	\and
	\IEEEauthorblockN{Gil Einziger}
	\IEEEauthorblockA{Nokia Bell Labs\\
		gil.einziger@nokia.com
	}
	\and
	\IEEEauthorblockN{Roy Friedman }
	\IEEEauthorblockA{Computer Science Department\\
		Technion\\
		roy@cs.technion.ac.il
	}
	\and
	\IEEEauthorblockN{Yaron Kassner}
	\IEEEauthorblockA{Computer Science Department\\
		Technion\\
		kassnery@cs.technion.ac.il
	}
}
\begin{document}

\maketitle
\newcommand{\sizeEst}{{\sc {$\epsilon$ - Volume Estimation}}}
\newcommand{\heavyHitters}{{\sc {$(\theta, \epsilon)$ - Elephant Flows}}}
\newcommand{\matrixCellWidth}{0.33\linewidth} 

\begin{abstract}
Monitoring the traffic volumes of elephant flows, including the total byte count per flow, is a fundamental capability for online network measurements.
We present an asymptotically optimal algorithm for solving this problem in terms of both space and time complexity.
This improves on previous approaches, which can only count the number of packets in constant time.
We evaluate our work on real packet traces, demonstrating an up to X2.5 speedup compared to the best~alternative.
\end{abstract}
\newcolumntype{L}[1]{>{\raggedright\let\newline\\\arraybackslash\hspace{0pt}}m{#1}}
\newcolumntype{C}[1]{>{\centering\let\newline\\\arraybackslash\hspace{0pt}}m{#1}}
\setlength\tabcolsep{1pt}
{\renewcommand{\arraystretch}{1.4}
	\begin{table*}[t!]
		\centering
		\begin{tabular}{|C{3.5cm}|C{3cm}|C{3cm}|C{3cm}|c|C{3cm}|}
			\hline 
			Algorithm & Space & Query Time & Update Time & Deterministic\tabularnewline
			\hline 
			\hline 
			Space Saving~\cite{SpaceSavingIsTheBest} & $O\left(\frac{1}{\epsilon}\right)$ & $O(1)$ & $O\left(\log\frac{1}{\epsilon}\right)$ &\cmark
			\tabularnewline
			\hline 
			
			CM Sketch~\cite{CMSketch} & $O\left(\frac{1}{\epsilon} \cdot \log \frac{1}{\delta}\right)$ & $O\left(\log \frac{1}{\delta}\right)$ & $O\left(\log \frac{1}{\delta}\right)$ & \xmark\tabularnewline
			\hline 
			Count Sketch~\cite{CountSketch} & $O\left(\frac{1}{\epsilon^2} \cdot \log \frac{1}{\delta}\right)$ & $O\left(\log \frac{1}{\delta}\right)$ & $O\left(\log \frac{1}{\delta}\right)$ & \xmark\tabularnewline
			\hline 
			This Paper- IM-SUM & $O\left(\frac{1}{\epsilon}\right)$ & $O(1)$ & $O(1)$ amortized & \cmark\tabularnewline
			\hline 
			This Paper- DIM-SUM & $O\left(\frac{1}{\epsilon}\right)$ & $O(1)$ & $O(1)$ worst case & \cmark\tabularnewline
			\hline 
		\end{tabular}\smallskip
		\caption{Comparison of frequency estimation algorithms that support weights. }
		\label{tbl:comparison}
	\end{table*} 
\section{Introduction}
\subsection{Background}

Network monitoring is at the core of many important networking protocols such as load balancing~\cite{DynamicFlow,LoadBalancing}, traffic engineering~\cite{TrafficEngeneering,TCPRetransmissions}, routing, fairness~\cite{ApproximateFairness}, intrusion and anomaly detection~\cite{IntrusionDetection2,IntrusionDetection,EXAnomalyDetection}, caching~\cite{TinyLFU}, policy enforcement~\cite{SLA} and performance diagnostics~\cite{DevFlow}.
Effective network monitoring requires maintaining various levels of traffic statistics, both as an aggregate and on a per-\emph{flow} basis.
This includes the number of distinct flows, also known as \emph{flow cardinality}, the number of packets generated by each flow, and the total traffic volume attributed to each flow.
Each of these adds complementing capabilities for managing and protecting the network.
For example, a large increase in cardinality may indicate a port-scanning attack, while statistics about the number of packets or the volume of traffic can help perform load balancing, meet QoS guarantees and detect denial-of-service (DoS) attacks.
For the latter, identifying the \emph{top-$K$} flows, or the \emph{heavy-hitters} and \emph{elephant flows} are essential competencies.
Similarly, traffic engineering~\cite{TrafficEngeneering} involves detecting high volume flows and ensuring that they are efficiently routed.

Operation speed is of particular importance in network measurement.
For example, a reduction in latency for partition/aggregate workloads can be achieved if we are able to identify traffic bursts in near real time~\cite{DevFlow}.
The main challenges in addressing the above mentioned tasks come from the high line rates and large scale of modern networks.
Specifically, to keep up with ever growing line rates, update operations need to be extremely fast.
As already mentioned, when near real-time decisions are expected, queries should also be answered quickly.
In addition, due to the huge number of flows passing through a single network device, memory is becoming a major concern.
In a hardware implementation, the data structures should fit in TCAM or SRAM, because DRAM is too slow to keep up with line rate updates.
Similarly, in a software implementation, as can be envisioned in upcoming SDN and NFV realizations, the ability to perform computations in a timely manner greatly depends on whether the data structures fit in the hardware cache and whether the relevant memory pages can be pinned to avoid swapping.

While the problem of identifying the top-$K$ flows and heavy-hitters in terms of the number of packets has been addressed by many previous works, e.g.,~\cite{WCSS,frequent4,BatchDecrement,SpaceSavings}, detection of elephant flows in terms of their traffic volume has received far less attention.
However, it is non-trivial to translate a top-$K$ or heavy-hitters algorithm to an efficient elephant flow solution,
because many of the former maintain ordered or semi-ordered data structures~\cite{WCSS,frequent4,BatchDecrement,SpaceSavings}.
Performing fast updates to these data structures depends on the fact that each packet increments (or decrements) its corresponding counter(s) by~$1$.
Hence, the perturbation caused by each update is fairly contained, predictable and tractable.
In contrast, when each packet modifies its corresponding counter(s) by its entire size, such maintenance becomes much harder.
Further, treating an increment (or decrement) by the packet's size $S$ as a sequence of $S$ increments (or decrements) by $1$ would multiply the update time by a factor of $S$, rendering it too slow.

\subsubsection{Contributions}

In this paper, we introduce the first elephant flow detection scheme that provides the following benefits:
($i$) constant time updates, ($ii$) constant time point-queries, ($iii$) detection of elephant flows in linear time, which is optimal, and ($iv$) asymptotically optimal space complexity.
This is achieved with two hash-tables whose size is proportional to the number of elephant flows, and a single floating point variable.
We present two flavors of the algorithm for maintaining these data structures.
The first version \emph{Iterative Median SUMming (IM-SUM)}, which is easier to describe and simpler to code, works in amortized $O(1)$ time.
The second variant \emph{De-amortized Iterative Median SUMming (DIM-SUM)} de-amortizes the first, thereby obtaining worst case $O(1)$ execution time.

We also evaluate the performance of these variants on both synthetic and real-world traces, and compare their execution time to other leading alternatives.
We demonstrate that IM-SUM is up to 2.5 times faster than any of our competitors on these traces, and up to an order of magnitude faster than others.
DIM-SUM is slower than IM-SUM, but still faster than all competitors for small values of $\varepsilon$.
DIM-SUM is faster in terms of worst case guarantee, ensuring constant update time.
The latter is important when real-time behavior is required.

\subsection{Related work}
Network monitoring capabilities are needed in both hardware and software~\cite{SDN1}.
In hardware, space is a critical constraint as there is no sufficient memory technology; while SRAM is fast enough to operate at line speed, it is too small to accommodate all flows.
On the contrary, DRAM is too slow to be read at line speed.
Traditional network monitoring approaches utilized short probabilistic counters~\cite{ICEBuckets,CEDAR,CASE} to reduce the memory requirements.

\emph{Counter arrays} are managed by network devices~\cite{ICEBuckets,CEDAR,CASE} as well as \emph{sketches} such as the Count Sketch~\cite{CountSketch} and the Count Min Sketch (CM-Sketch)~\cite{CMSketch}. These algorithms can be extended to support finding frequent items using hierarchy and group testing~\cite{SpaceSavingIsTheBest,CMSketch,GroupTesting}.
As network line rates became faster, sketches evolved into more complex algorithms that require significantly less memory at the expense of a long decoding time.
Such algorithms include Counter Braids~\cite{CounterBraids}, Randomized Counter Sharing~\cite{RCS} and Counter Tree~\cite{CounterTree}.
While these algorithms handle updates very fast, they incur a long query time.
Thus, queries can only be done off-line and not in real time as required by some networking applications.

For software implementations, counter based algorithms are often the way to go~\cite{SpaceSavingIsTheBest}.
These methods typically maintain a flow table, where each monitored flow receives a table entry.
Counter algorithms differ from one another in the the flow table maintenance.
Specifically, in \emph{Lossy Counting}~\cite{LC}, new flows are always added to the table.
In order to keep the table size bounded, flow counters are periodically decremented and flows whose counters reach 0 are deleted.
Lossy counting is simple and effective, but its space consumption is not optimal.
Lossy Counting's space consumption was empirically improved with probabilistic eviction of entries~\cite{PLC} as well as statistical knowledge about the stream distribution~\cite{MLC} that allows dropping excess table entries~earlier.

\emph{Frequent (FR)}~\cite{BatchDecrement} is a space optimal algorithm~\cite{Kranakis03boundsfor}.  In FR, instead of decrementing counters periodically, when a packet arrives for a non resident flow and the table is full, all counters are decremented. This improves the space complexity to optimal, but flow counters are needlessly decremented and therefore the algorithm is less accurate.
This is solved by the Space Saving algorithm~\cite{SpaceSavings}.
In Space Saving, when a packet that belongs to a flow that does not have a counter arrives, and the flow table is full, the algorithm evicts the entry whose counter is minimal.
The rest of the counters are untouched and their estimation is therefore more accurate.
For packet counting, since all counter updates are $+1$, Space Saving and FR can be implemented in $O(1)$~\cite{WCSS,SpaceSavings}.
This makes Space Saving and FR asymptotically optimal in both space and time.
The version of Space Saving we compare against in this paper uses a heap to manage its counters~\cite{SpaceSavingIsTheBest,SpaceSavingIsTheBest2010,SpaceSavingIsTheBest2009}; we denote it hereafter SSH.
In the general case, packets have different sizes so SSH requires a logarithmic runtime.
This is preferred over the original implementation with ordered linked lists~\cite{SpaceSavings}, which requires a constant time to count the number of packets but a linear time when considering packet sizes.

Various methods such as a sketches for tail items and a randomized counter admission policy can make (empirically) more efficient data structures~\cite{FSS,RAP}.
These improvements are orthogonal to the one presented in this paper.

Alternatively,~\cite{SamplingSDN} implemented elephant flow identification with a \emph{Sample and Hold} like technique~\cite{LC}.
That is, the controller receives sampled packets and periodically updates the monitored flows to reflect the current state of the heavy hitters. Unfortunately, sampling discards some of the information about packet sizes that can be used to reduce the error. In addition, since sampling does not take into account the packet size, large packets can be missed, resulting in a large error.

In general, $\Omega\left(\frac{1}{\epsilon}\right)$ space is required to approximately count with an additive error bounded by $\varepsilon\cdot R$, where $R$ is the total weight of all items in the stream~\cite{SpaceSavings}.
The optimal runtime is $O(1)$ for both update and query.
Related works which are capable of estimating flow volume (rather than packet counting) are summarized in Table~\ref{tbl:comparison}.
As listed, our algorithms are the first to offer both (asymptotically) optimal space consumption and optimal runtime.

\section{Model}
We consider a stream ($\mathcal S$) of tuples of the form, $(a_i,w_i)$, where $a_i$ denotes item's id and $w_i$ its (non-negative) weight.
At each step, a new tuple is added to the stream and we denote by $N$ the current number of tuples in the stream.

Given an identifier $x$, we denote the total weight of $x$ as:
$$f^{t}_{x} \triangleq \sum_{\substack{(a_i,w_i)\in\mathcal S:\\a_i=x, i\le t} }w_i.$$
When $t=N$, we denote: $f_{x}\triangleq f^{N}_{x}$.

Also, the total weight of all ids in the stream at time $t$ is:
$$R_t \triangleq \sum_{(a_i,w_i)\in\mathcal S:\\i\le t} w_i.$$

\subsection{Problem definitions:}
We now formally define the problems we address.
\begin{itemize}
	\item {\sc \textbf{$\epsilon$ - Volume Estimation}}:
	We say that an algorithm solves the \sizeEst{} problem if at any time $t$, ${Query(x)}$ returns an estimation $\widehat{f^{t}_x}$ that satisfies
	$$f^{t}_{x} \le \widehat{f^{t}_x}\le f^{t}_{x} + R_t\cdot\epsilon.$$
	
	\item {\sc \textbf{$(\theta, \epsilon)$-Elephant~Flows}}: We say that an algorithm solves the \heavyHitters{} problem if  at any time $t$, an ${Elephants()}$ query returns a set of elements $S_t$, such that for every flow $x$, $f_x^t > R_t\cdot\theta \implies x\in S_t$, and $f_x^t<R_t\cdot{\left(\theta-\epsilon\right)} \implies x\not\in S_t$. We note that if the stream is unweighted (all weights are $1$), this degenerates to the Heavy Hitters problem discussed in~\cite{SpaceSavingIsTheBest,SpaceSavings,HeavyHitters}.
\end{itemize}

\subsection{Notations}
The notations used in this paper are summarized in Table~\ref{tbl:notations}.

\newcolumntype{L}[1]{>{\raggedright\let\newline\\\arraybackslash\hspace{0pt}}m{#1}}
\newcolumntype{C}[1]{>{\centering\let\newline\\\arraybackslash\hspace{0pt}}m{#1}}
\setlength\tabcolsep{1pt}
{\renewcommand{\arraystretch}{1.8}
	
	\begin{table}[t]
		\begin{center}
			\begin{tabular}{|c|C{7cm}|}
				\hline
				Symbol & Meaning\tabularnewline
				\hline
				\hline
				$R$ & The total sum of all items in the stream so far.\tabularnewline
				\hline
				$R_t$ & The total sum of all items in the stream until time $t$.\tabularnewline
				\hline
				$f_{x}$& The weighted  frequency for item $x$ in the stream so far.\tabularnewline
				\hline
				$f^{t}_{x}$& The weighted  frequency for item $x$, at time $t$.\tabularnewline
				\hline
				$\widehat{f}^{t}_{x}$
				& An estimate of the weighted frequency of item $x$, at time $t$.\tabularnewline
				\hline
				$\gamma$ & A speed-space tradeoff parameter, affects frequency of maintenance operations and memory consumption.\tabularnewline
				\hline
				$\epsilon$ & Accuracy parameter, small $\epsilon$ means lower estimation error but more counters. \tabularnewline
				\hline
				$q_t$ & The estimate of the top $\left\lceil\frac{1}{1+\gamma}\right\rceil$ quantile at time $t$. \tabularnewline
				\hline
				$\theta$ & Threshold for elephant flows. \tabularnewline
				\hline
				$\delta$ & The error probability allowed in randomized algorithms. \tabularnewline
				\hline 			
			\end{tabular}
		\end{center}\smallskip
		\caption{Summary of notations used in this paper.}
		\label{tbl:notations}
		
	\end{table}

\section{Solution}
\subsection{Intuition}
Ideally, when the table is full, we wish to evict the smallest flow as in Space Saving.
However, in the weighted case, Space Saving runs in logarithmic time.
We achieve constant runtime by relaxing the memory constraint; instead of removing the minimal flow upon arrival of a non resident flow, we periodically remove many small flows from the table at once.
This maintenance operation takes linear time, but it is invoked infrequently enough to achieve amortized $O(1)$ runtime.
The table size only increases by a constant factor, so our solution remains (asymptotically) space optimal.

In principle, an optimal algorithm requires $O(\frac{1}{\epsilon})$ space~\cite{BerindeCormodeIndykStrauss09}, and runs in constant time. The algorithm provides an approximation guarantee that the error is bounded by $R\epsilon$ where $R$ is the total weight of all flows in the stream.

Intuitively, our periodic maintenance process identifies the $\left\lceil\frac{1}{\epsilon}\right\rceil^{th}$ largest counter, and evicts all flows whose counters is smaller than that counter.
This is obtained by internally splitting the table into two sub-tables named \emph{Active} and \emph{Passive}.
Each table entry contains both the flow's id and a counter that represents the flow's  volume.
The Active table is mutable, while the Passive is read-only.
That is, updates only affect the Active table while queries consider both of them.
The maintenance process copies all counters that are larger than the quantile of the $\left\lceil\frac{1}{\epsilon}\right\rceil^{th}$ largest flow from the Passive table to the Active table, clears the Passive table, and then switches between them.
That way, the size of the Active and Passive tables remains bounded while the $\left\lceil\frac{1}{\epsilon}\right\rceil$ largest flows are never evicted.

Each table is configured to store at most:
$T\triangleq\left\lceil\frac{\gamma}{\epsilon}\right\rceil+\left\lceil\frac{1}{\epsilon}\right\rceil-1$ items, where $\gamma$ is a positive constant that affects a speed-space trade off as explained below.
In addition, we keep a single floating point variable $q$, which represents the value of the last quantile calculated.
$q$ is the default estimation returned when querying the frequency of non-resident flows.
It ensures that such estimates meet the maximum error guarantee.

We then improve the complexity to worst case $O(1)$ with a \emph{de-amortization} process.
Intuitively, this is done by performing a small piece of the maintenance prior to each update.
We need to make sure that maintenance is always finished by the time the Active table fills up.
Once maintenance is over, we switch the Active table with the (now empty) Passive table.

\begin{figure*}[tbp]
	\begin{tabular}{ccc}
		\subfloat[]{\includegraphics[width = \matrixCellWidth]
			{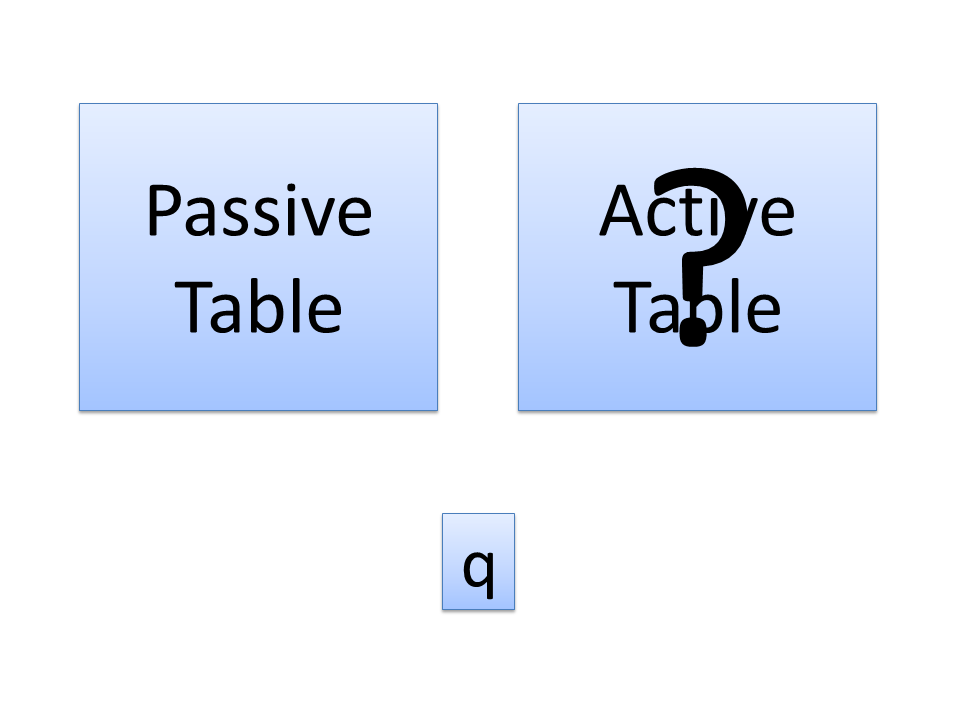}\label{fig:query1}} &
		\subfloat[]{\includegraphics[width = \matrixCellWidth]
			{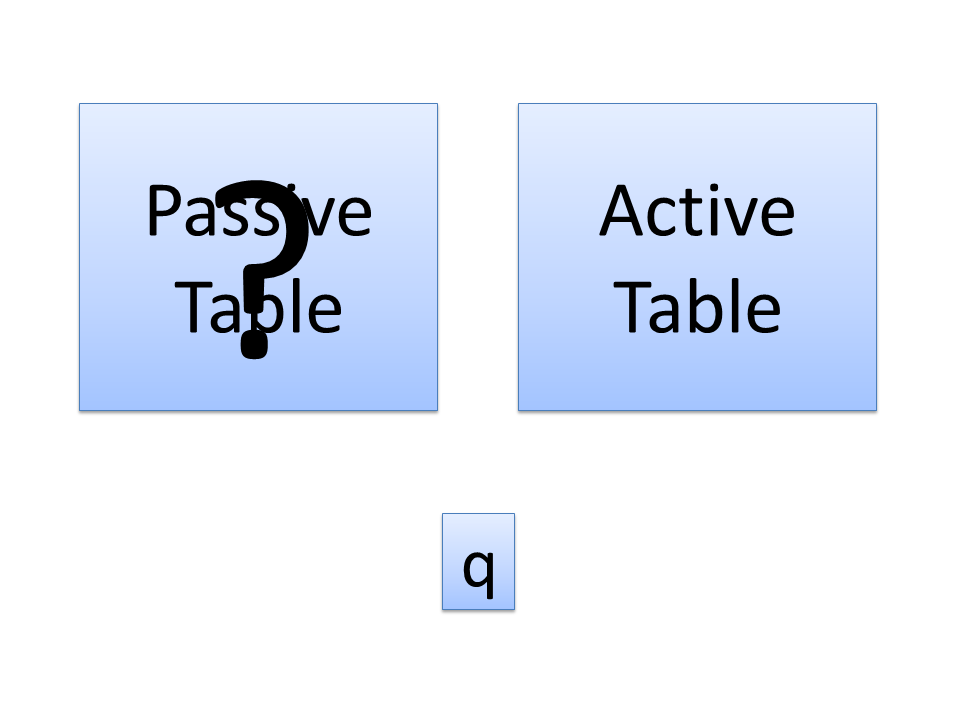}\label{fig:query2}} &
		\subfloat[]{\includegraphics[width = \matrixCellWidth]
			{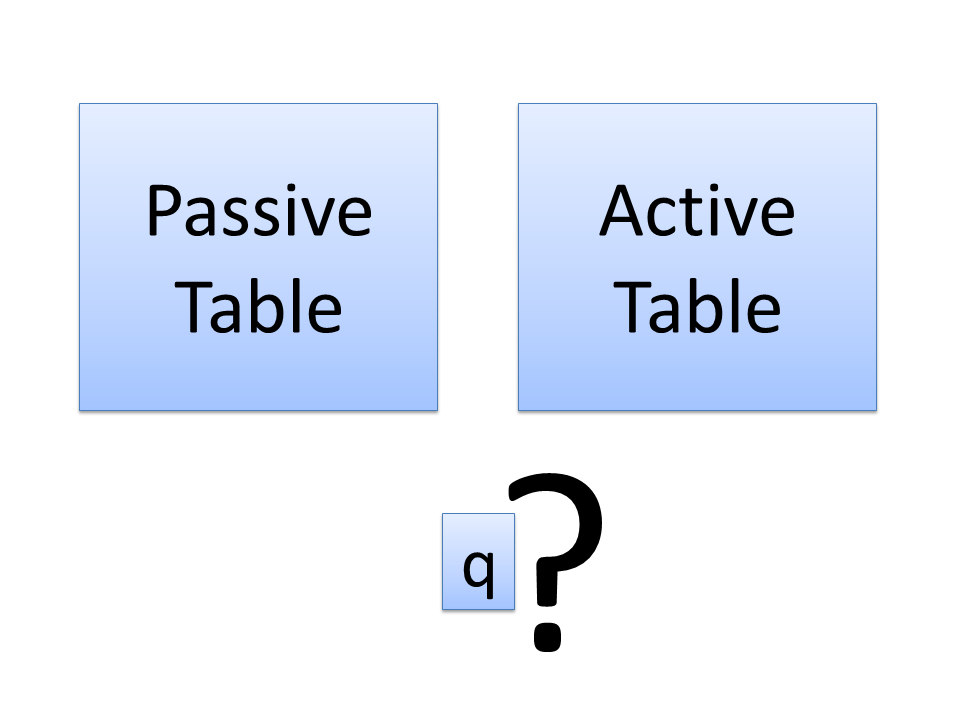}\label{fig:query3}}
	\end{tabular}
	\caption{\label{fig:query} An illustration of the query process: first we query the Active table, then if the item was not found, we query the Passive table, and finally we return $q$ if the item was not found in the Passive table as well.}
\end{figure*}

\subsection{Iterative Median Summing}
\label{algorithm}
We describe our \emph{Iterative Median SUMming (IM-SUM)} algorithm and its deamortized variant \emph{Deamortized Iterative Median SUMming (DIM-SUM)}, by detailing four operations: a query operation, which computes the volume of a flow; an elephants method that returns all elephant flows; an update operation, which increases the volume of a flow corresponding to a recently arrived packet; and a maintenance operation, which periodically discards infrequent flows to prevent overflowing.

\paragraph{$Query(x)$} The query procedure appears in Figure~\ref{fig:query}.
As depicted, the estimated weighted frequency of element $x$, $((\widehat{f}_{x})$, is $Active[x]$ if $x$ has an Active table entry,
$Passive[x]$ if $x$ has a Passive table entry and otherwise $q$.

\paragraph{$Elephants()$} Returns the elephant flows. First, we set $S_t \gets \emptyset$;
then, we traverse both the Active and Passive tables, adding to $S_t$ any flow $x$ for which $Query(x)\ge R_t\cdot\theta$.

\begin{figure}[h]
	\includegraphics[width = \columnwidth]
	{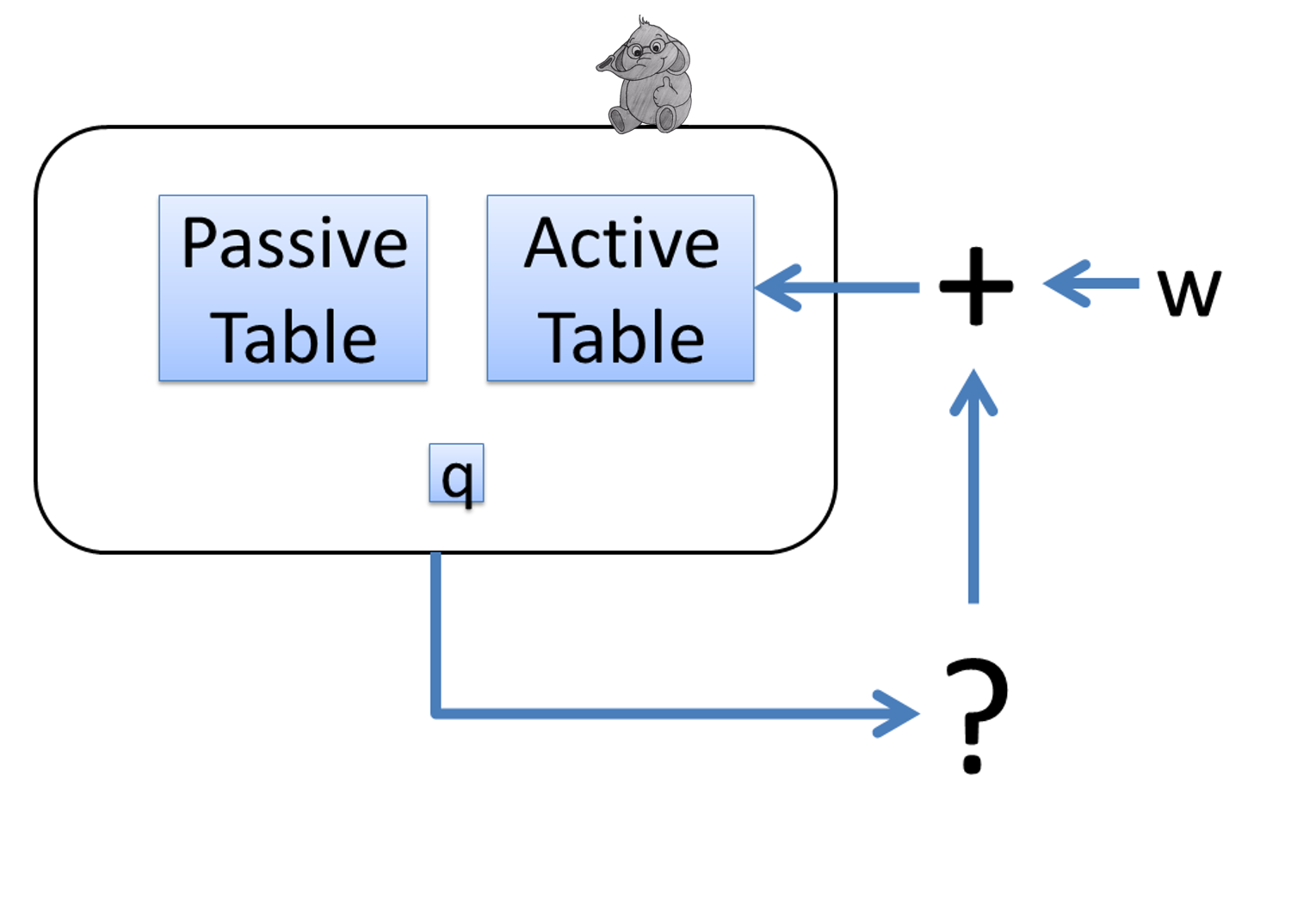}
	\caption{\label{fig:update} An illustration of the update process: we perform a query, add $w$ to the result and then update the Active table.}
\end{figure}
\paragraph{Update}
The update procedure is illustrated in Figure~\ref{fig:update}. The procedure deals with an arrival of a pair $(a_i,w)$ (flow id, weight), e.g., $a_i$ may be a TCP five tuple and $w$ the byte size of its payload.
In the update procedure, we first perform a query for $a_i$.
Let $\widehat{f}^{t-1}_{a_i}$ be the query result for $a_i$.
We add the following entry to the Active table: $\left(a_i,\widehat{f}^{t-1}_{a_i}+w\right)$.
\begin{figure*}[tbp]
	\begin{tabular}{cccc}
		\subfloat[]{\includegraphics[width = 0.25\textwidth]
			{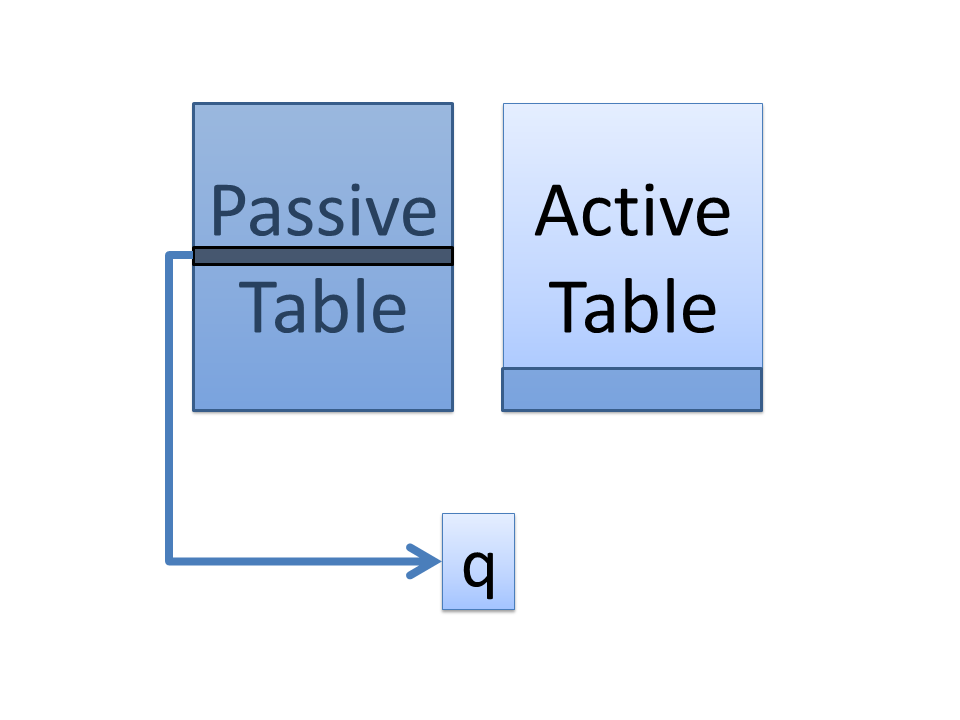}\label{fig:step1}} &
		\subfloat[]{\includegraphics[width =  0.25\textwidth]
			{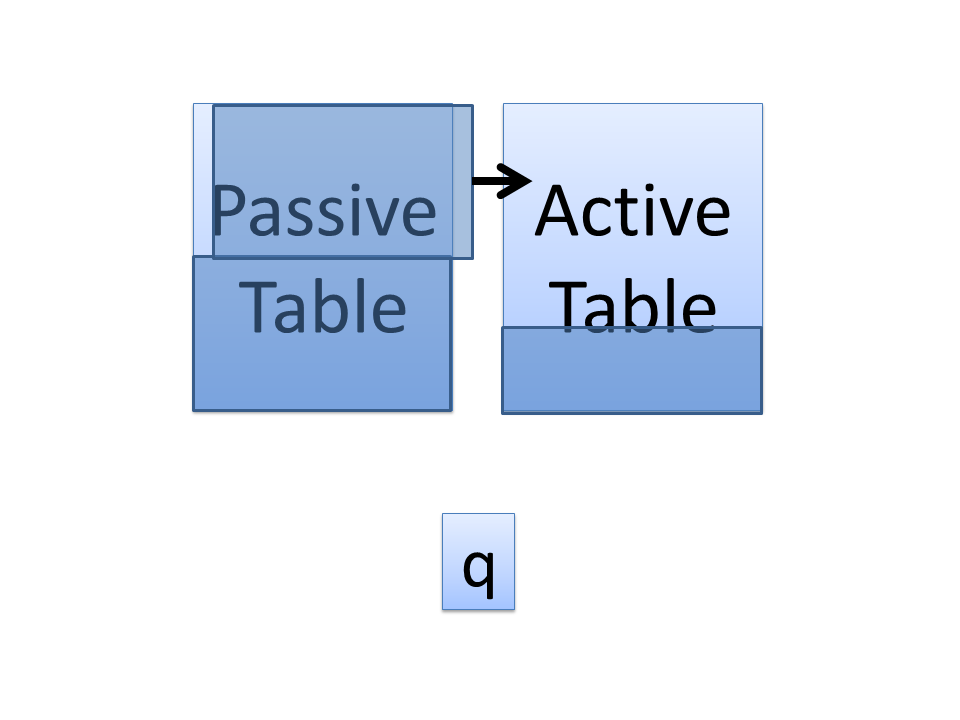}\label{fig:step2}} &
		\subfloat[]{\includegraphics[width =  0.25\textwidth]
			{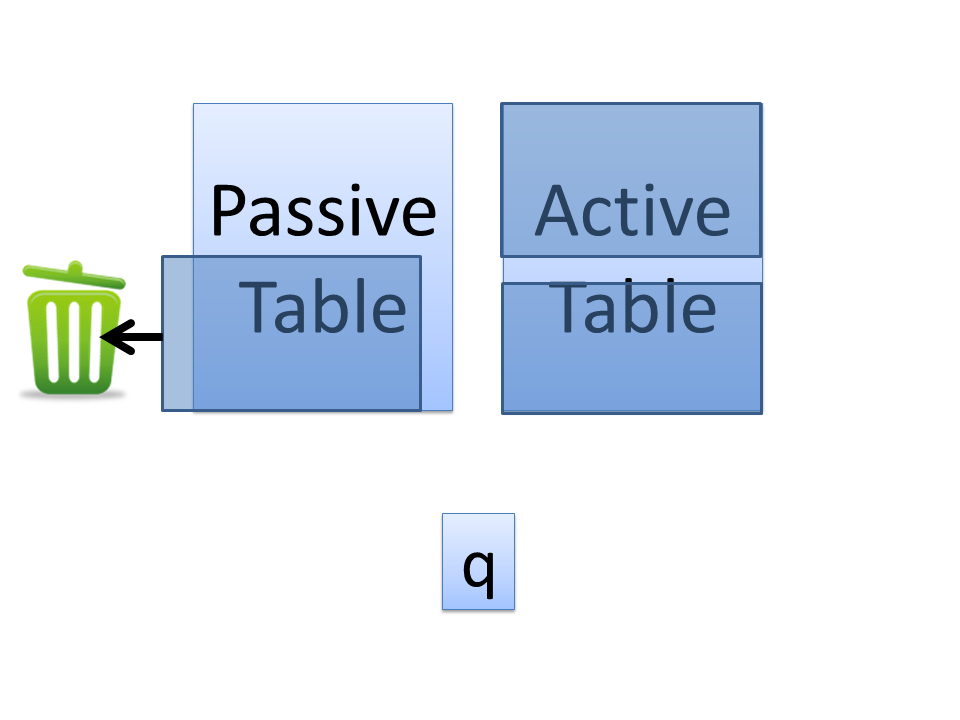}\label{fig:step3}}&
		\subfloat[]{\includegraphics[width =  0.25\textwidth]
			{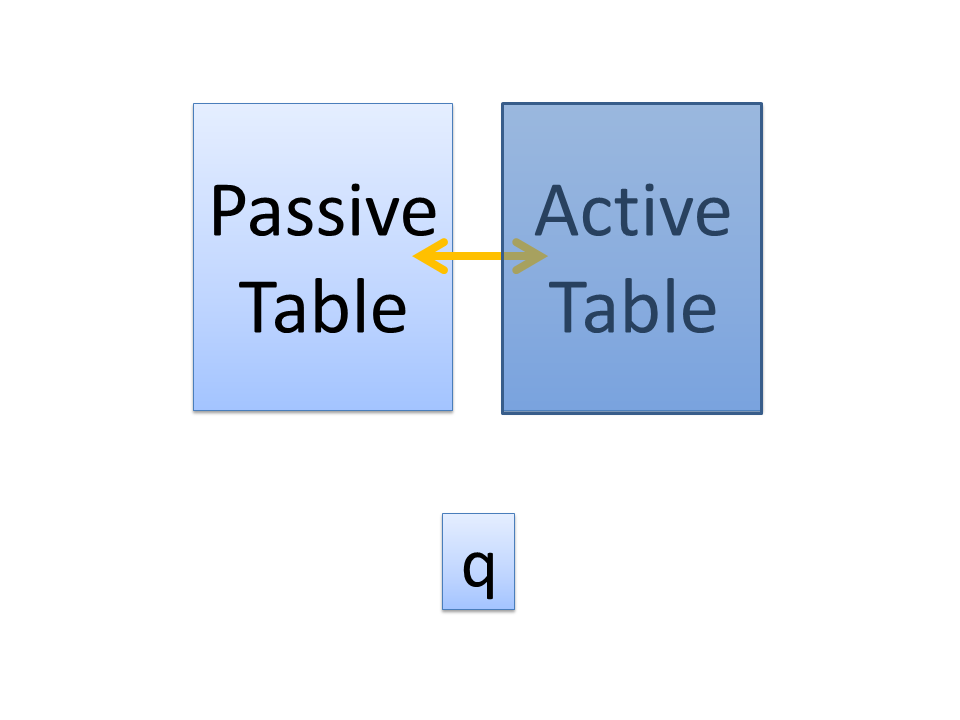}\label{fig:step 4}}
	\end{tabular}
	\caption{\label{fig:maintenance} An illustration of the maintenance process. In~\ref{fig:step1}, we find a quantile for the Passive table (a median in this example where $\gamma = 1$). Next, in~\ref{fig:step2}, we copy all entries larger than the quantile to the Active table. Then, in~\ref{fig:step3}, we clear all remaining entries from the Passive table. Finally, in~\ref{fig:step 4}, we swap the Active and the Passive tables.}
\end{figure*}

\paragraph{Maintenance}
As mentioned above, we require periodic maintenance to keep the Active table from overflowing.
We define \emph{generation} in the stream; a generation starts/ends every time the Active table fills up.
Before a generation ends,
the maintenance process needs to achieve the following:

\begin{enumerate}
	\item \label{step:quantile} Set $q$ to be the $\left\lceil\frac{1}{\epsilon}\right\rceil^{th}$ largest value in the Passive table.
	\item \label{step:move} Move or copy all items with frequency larger than $q$ from the Passive table to the Active table, except for the items that already appear in the Active table.
	\item \label{step:clear} Clear the Passive table.
	\item \label{step:replace} When the Active Table has filled, switch the Active and Passive tables before a new generation begins. After the switch, the Active table is empty and the Passive contains all entries that were a part of the Active table.
\end{enumerate}
The maintenance process is illustrated in Figure~\ref{fig:maintenance}.
Maintenance operations are performed once per generation and we need to guarantee that the number of items in the Active table never exceeds the allocated size of $T\triangleq\left\lceil\frac{\gamma}{\epsilon}\right\rceil+\left\lceil\frac{1}{\epsilon}\right\rceil-1$.
We later prove that the number of update operations between two consecutive maintenance operations is at least $F\triangleq\lceil\frac{\gamma}{\epsilon}\rceil$, and show that as a result the update time is constant if $\gamma$ is constant.

\subsection{Maintenance Implementation} \label{maintenance}
There are several ways to implement the maintenance steps described in Section~\ref{algorithm}, which provide different runtime guarantees.
Section~\ref{sec:one-piece} explains that when the maintenance operation is run serially after the Active table fills up we achieve an amortized $O(1)$ runtime, while Subsection~\ref{sec:alg-de-amortized}
explains how to improve the runtime to $O(1)$ worst case.

\subsubsection{Simple Maintenance}
\label{sec:one-piece}
In this suggestion, we perform the maintenance procedure serially at the end of each generation.
The most computationally intensive step is Step 1, where we need to find the $\lceil\frac{1}{\epsilon}\rceil^{th}$ largest counter.
When $\gamma = 1$, this value is the median, and as long as $\gamma$ is constant, it is a certain percentile. It is well known that median (and other percentiles) can be calculated in linear time in a deterministic manner, e.g., by using the median of medians algorithm. Therefore, the time complexity of Step~\ref{step:quantile} is linear with the table size of $O\left(\frac{1+\gamma}{\epsilon}\right)$.

In Step~\ref{step:move}, we copy at most $\frac{1}{\epsilon}$ entries from the Passive table to the Active table.
The complexity of this step is $O(\frac{1}{\epsilon})$.
Step 3 can be performed in $O(1)$ time or in $O\left(\frac{1+\gamma}{\epsilon}\right)$ time, depending if the hash table supports constant time flush operations.
Finally, Step 4 requires $O(1)$ time.
Therefore, since the number of update operations in a generation is $\lceil\frac{\gamma}{\epsilon}\rceil$, as shown in Lemma~\ref{frameSize}, the average update time per packet is $O(1+\frac{1}{\gamma})$, which is $O(1)$ when $\gamma$ is constant.
Therefore, the simple maintenance procedure can be executed with $O(1)$ amortized time.
As mentioned above, we nickname our algorithm with the simple maintenance procedure Iterative Median SUMming (IM-SUM).

\subsubsection{De-amortized Implementation}
\label{sec:alg-de-amortized}
In deamortization, we can perform only a small portion of the maintenance procedure on each packet arrival.
The challenge is to evenly split the maintenance task between update operations.
Our deamortized algorithm, nicknamed De-amortized Iterative Median SUMming (DIM-SUM), uses the following two techniques.

\paragraph{Controlled Execution}
\label{sec:controlled execution}
Here, we execute $O(1)$ maintenance operations after processing each packet.
Since the total maintenance time is $O\left(\frac{1+\gamma}{\epsilon}\right)$, we divide the maintenance to tasks of $O\left(1+\frac{1}{\gamma}\right)$ operations each and execute a single task for each update operation.
Therefore, when $\gamma$ is chosen to be a constant, we achieve $O(1)$ worst case complexity.

Note that we do not know how many update operations will actually occur before the Active table fills, e.g., if a flow with an Active table entry is updated, then the number of Active table entries does not increase.
This motivates us to dynamically adjust the size of the maintenance operations according to the workload, to avoid a large variance in the length of update operations.
If we use the static bound proven above, the first update operations will be slow, and once maintenance is over, update operations will be fast.
To avoid this situation, we suggest an approach to dynamically adjust the task size according to the actual workload.

To do so, prior to each update operation, we recalculate the minimum number $U$ of remaining updates according to the actual number of entries in the Active table and the maximal number of entries that still need to be copied from the Passive table.
Next, we calculate the maximum number of operations $M$ left in the current maintenance iteration.
Before each update operation, we execute $\lceil \frac{M}{U}\rceil$ maintenance operations.

\paragraph{Double Threading}
\label{sec:double threading}


A second approach is to run some of the maintenance steps in a second thread in parallel to the update operations.
We should ascertain that a sufficient number of maintenance operations are executed with every update operation.
This resembles a producer-consumer problem, where the maintenance thread is the producer and the update thread is the consumer. It can be implemented with a semaphore.
After a sufficient number of maintenance operations is executed, the semaphore increments, and before an update operation is executed, the semaphore decrements.

While the above technique requires synchronization between the two threads, it has a few advantages.
First, it simplifies the implementation.
The calculation of a quantile is usually performed with a recursive function.
Therefore, freezing the calculation and resuming it later requires saving the trace of recursive calls.
Second, the double-threading procedure enables running the update and the maintenance procedures in parallel, which can save time.
The drawback of this technique is the use of semaphore for synchronization, which hurts performance.

In our deamortized implementation, we used a combination of the two techniques.
We used controlled execution for steps~\ref{step:move} and~\ref{step:clear} and double-threading for Step~\ref{step:quantile}. Step~\ref{step:replace} is executed only at the end of a generation.

\ifdefined \EXTENDED
\subsection{What is in a Hash Table?}
The only data structure used by our solution is a hash table.
\fi
\ifdefined \NINEPAGES
Notice that the only data structure used by our solution is a hash table.
\fi
This means that the hash table's properties directly impact the performance of our solution.
In particular, advances in hash tables (e.g.,~\cite{TinyTable})  can directly improve our work.

\ifdefined \EXTENDED
Below, we survey some interesting~possibilities.

\subsubsection{Do we really need two different tables?}
In principle, one could implement both Active and Passive tables as a single table with a bit that decides if the entry belongs to the Passive or Active table.
This approach can speed up the maintenance process, as moving an item from the Passive to the Active table can be done by flipping a bit, which can even be done in an atomic manner.

\subsubsection{Cuckoo Hash Table}
Cuckoo hashing is very attractive for read intensive workloads as it offers $O(1)$ worst case queries.
The trade off is that insertions end in $O(1)$ time only with high probability.

Another interesting benefit of the Cuckoo hash table is fast delete operations; all entries below a certain threshold can be instantly removed by treating them as empty~\cite{SlidingWindowBF}.
A similar technique could also be used in our case and further reduce the time required to clear the Passive table.

\subsubsection{Compact Hash Tables}
When space is very tight, we can leverage compact hash tables.
These data structures avoid the use of pointers and instead employ bit indexes to encode the table structure~\cite{TinyTable,RankedIndexHashing}.
Clearly, leveraging such structures is expected to significantly reduce space consumption.

\subsubsection{Fast Hash Tables}
Perhaps the most obvious benefit of relying only on standard hash tables is speed. Since we do not maintain any proprietary data structures, we can potentially improve our runtime as faster hash tables are developed.
For example, \emph{Hop Scotch Hashing}~\cite{Herlihy08hopscotchhashing} is considered one of the fastest parallel hash tables in practice.
\fi
\section{Analysis}
\subsection{Correctness}
Our goal is to prove that IM-SUM and DIM-SUM solve the \sizeEst{} and \heavyHitters{} problems. To do so, we first bound the value of $q_t$.

\begin{lemma}
	$q_t\le R_t\cdot \epsilon$ at all times $t$.
	\label{lem:quantile-size}
\end{lemma}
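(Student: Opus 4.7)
The value of $q_t$ is piecewise constant in time: it equals $q_j$ throughout generation $j$, where $q_j$ is the $\lceil 1/\epsilon \rceil$-th largest counter in the Passive table at the moment generation $j$ begins. Since $R_t$ is non-decreasing, it suffices to prove $q_j \le \epsilon R_{t^*_j}$ for every $j$, where $t^*_j$ denotes the start of generation $j$. The plan is to induct on $j$; the base case $q_1=0$ is immediate because the algorithm initialises $q$ to $0$.

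For the inductive step, the plan is to sandwich the total mass $S_j$ of the $T = \lceil \gamma/\epsilon \rceil + \lceil 1/\epsilon \rceil - 1$ counters in Passive at time $t^*_j$ between two bounds whose $q_{j-1}$ terms cancel. The lower bound uses that every counter in Passive at $t^*_j$ is at least $q_{j-1}$, since each was either moved into Active by the previous maintenance (hence was strictly greater than $q_{j-1}$) or freshly created during generation $j-1$ with initial value $q_{j-1}+w$; together with the definition of $q_j$ this gives
$$S_j \;\ge\; \lceil 1/\epsilon \rceil \, q_j \;+\; (\lceil \gamma/\epsilon \rceil - 1)\, q_{j-1}.$$
For the upper bound, the bookkeeping identity $S_j = I_{j-1} + W_{j-1} + N^{\text{new}}_{j-1}\,q_{j-1}$ (where $I_{j-1}$ is the mass promoted from Passive to Active during the previous maintenance, $W_{j-1}$ is the total weight of updates in generation $j-1$, and $N^{\text{new}}_{j-1} \le T$ is the number of new entries), together with an auxiliary invariant $I_{j-1} + \lceil 1/\epsilon \rceil\,q_{j-1} \le R_{t^*_{j-1}}$ and the identity $T-\lceil 1/\epsilon \rceil = \lceil \gamma/\epsilon \rceil - 1$, yields
$$S_j \;\le\; R_{t^*_j} \;+\; (\lceil \gamma/\epsilon \rceil - 1)\, q_{j-1}.$$
Subtracting the two bounds cancels the $(\lceil \gamma/\epsilon \rceil - 1)\,q_{j-1}$ slack and gives $\lceil 1/\epsilon \rceil\,q_j \le R_{t^*_j}$, hence $q_j \le R_{t^*_j}/\lceil 1/\epsilon \rceil \le \epsilon R_{t^*_j}$.

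The main obstacle is establishing the auxiliary invariant $I_{j-1} + \lceil 1/\epsilon \rceil\,q_{j-1} \le R_{t^*_{j-1}}$ alongside the main claim. Intuitively it asserts that the ``augmented top-$\lceil 1/\epsilon \rceil$ mass'' of Passive at generation boundaries never exceeds the stream weight seen so far, and it is precisely what the choice $T = \lceil \gamma/\epsilon \rceil + \lceil 1/\epsilon \rceil - 1$ is calibrated for: the overcount of at most $(\lceil \gamma/\epsilon \rceil - 1)\,q_{j-1}$ that new Active entries inject per generation is exactly absorbed by the $\lceil \gamma/\epsilon \rceil - 1$ bottom entries of the next Passive snapshot, each of which is guaranteed to be at least $q_{j-1}$. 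The plan is to carry this invariant through the joint induction, re-applying the same sandwich with $q_j$ in place of $q_{j-1}$ to re-establish it at the next generation.
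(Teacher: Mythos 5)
Your route is genuinely different from the paper's. The paper tracks a single invariant by induction on individual arrivals: the sum $S_t$ of the $\lceil 1/\epsilon\rceil$ largest \emph{estimates} never exceeds $R_t$ (the only non-trivial case being an item that newly enters the top-$\lceil 1/\epsilon\rceil$ set, where the at most $Q_{i-1}+w_i$ it adds is offset by the at least $Q_{i-1}$ of the item it displaces). At a generation boundary all items sit in the Passive table, so by averaging the $\lceil 1/\epsilon\rceil$-th largest counter is at most $R_g/\lceil 1/\epsilon\rceil\le \epsilon R_g$. Your per-generation sandwich on the \emph{total} Passive mass, with the bottom $\lceil\gamma/\epsilon\rceil-1$ entries absorbing the overcount injected by fresh entries, is a legitimate alternative and does correctly explain what the table size $T$ is calibrated for. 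However, as written it has two concrete gaps.

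First, the bookkeeping identity $S_j = I_{j-1}+W_{j-1}+N^{\text{new}}_{j-1}q_{j-1}$ and the matching lower bound both assume every non-promoted entry is initialised to $q_{j-1}+w$. That is false for updates to flows still resident in the not-yet-cleared Passive table (the query returns $Passive[a]$, so the new Active entry gets $Passive[a]+w$, which can lie strictly between $q_{j-2}$ and $q_{j-1}$), and, in the de-amortized variant, for entries created before Step~1 installs the new quantile (these get $q_{j-2}+w$). For such entries the lower bound ``every counter is at least $q_{j-1}$'' fails and the intended cancellation against the upper bound breaks; you must either restrict to the atomic-maintenance (IM-SUM) semantics where Passive is cleared at the start of the generation, or pair these entries' contributions exactly in both bounds so they cancel identically. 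Second, your auxiliary invariant in the strong form $I_{j-1}+\lceil 1/\epsilon\rceil q_{j-1}\le R_{t^*_{j-1}}$ does not close the induction: re-running your sandwich only yields $I_j+(\lceil 1/\epsilon\rceil-P_j)q_j\le R_{t^*_j}$, where $P_j$ is the number of Passive entries strictly above $q_j$, which is short of the strong form by $P_jq_j$. The fix is to carry the weaker invariant ``the sum of the top $\lceil 1/\epsilon\rceil$ counters at a generation boundary is at most $R$'' (exactly the paper's invariant, evaluated at boundaries): it is both sufficient for your upper bound (the $P_{j-1}$ terms cancel) and re-establishable. With those two repairs your argument goes through, but the paper's per-arrival induction avoids both issues and is the shorter path.
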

\ifdefined \NINEPAGES
Due to lack of space, the proof of this lemma appears in the full version of this paper~\cite{dimsum-full}.
\fi
\ifdefined \EXTENDED
\begin{proof}[Proof of Lemma~\ref{lem:quantile-size}]
	Let $S_t$ be the sum of the $\left\lceil\frac{1}{\epsilon}\right\rceil$ items in the tables with the largest estimation value at time $t$. We show by induction on $i$ that $S_i\le R_i$ for any time $i$. Assume that $S_{i-1}\le R_{i-1}$.
	
	When an item $x$ arrives at time $i$, if it is already one of the $\left\lceil\frac{1}{\epsilon}\right\rceil$ largest items, then both $R_{i-1}$ and $f^{t}_{x}$ increase by $w_i$, and consequently $S_i=S_{i-1}+w_i\le R_{i-1}+w_i=R_i$.
	
	If $x$ was not one of the largest items at time $i-1$ and has not become one at time $i$, then $S_i=S_{i-1}\le R_{i-1}\le R_i$.
	
	If, however, $x$ was not one of the largest items at time $i-1$ but became one at time $i$, mark the frequency of the $\left\lceil\frac{1}{\epsilon}\right\rceil^{th}$ largest item by $Q_{i-1}$.
	If there are fewer than $\left\lceil\frac{1}{\epsilon}\right\rceil$ items in the tables, define $Q_{i-1}=q_{i-1}$.
	After adding $w_i$, the estimation of $x$ ($\widehat{f}^{t}_x$) becomes at most $Q_{i-1}+w_i$.
	Therefore $\widehat{f}^{t}_x$ adds to $S_{i-1}$ up to $Q_{i-1}+w_i$.
	Fortunately, if $Q_{i-1}>0$, then there must be at least $1\over\epsilon$ items with a positive estimation, and since $x$ has become one of the largest items, it must have replaced an item with frequency at least $Q_{i-1}$.
	Therefore, we subtract at least $Q_{i-1}$ from $S_{i-1}$. In case $Q_{i-1}$ was $0$, we still subtract $Q_{i-1}=0$ from $S_{i-1}$.
	Either way, $$S_i\le S_{i-1}+Q_{i-1}+w_i-Q_{i-1}=S_{i-1}+w_i\le R_{i-1}+w_i=R_i.$$
	Thus $S_i\le R_i$ for every $i$.
	
	We next consider the time $g$ at the beginning of a generation before $q_t$ was computed. There cannot be $\left\lceil\frac{1}{\epsilon}\right\rceil$ items larger than ${R_g\over \left\lceil\frac{1}{\epsilon}\right\rceil}\le R_g\epsilon$, or otherwise $S_g$ would be larger than $R_g$. Since this is the beginning of the generation, all of the items are in the passive table, and the quantile will be computed on the current items. Therefore $q_t$ will be smaller or equal to $R_g\epsilon$. Since the sum of elements only grows, $q_t\le R_g\epsilon\le R_t\epsilon$.
\end{proof}
\fi

\begin{theorem}\label{thm:correctness}
	Our algorithm solves \sizeEst:
	That is, at any step ($t$): ${f}^{t}_{x} \le \widehat{f}^{t}_{x} \le{f}^{t}_{x}+R_t\epsilon$.
\end{theorem}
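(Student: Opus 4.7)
The plan is to prove by induction on event time $t$ the following three invariants, from which the theorem follows immediately together with Lemma~\ref{lem:quantile-size}: (I1) if $y$ is in the Active table at time $t$, then $f^t_y \le Active[y] \le f^t_y + R_t\epsilon$; (I2) if $y$ is in the Passive table but not in the Active table at time $t$, then $f^t_y \le Passive[y] \le f^t_y + R_t\epsilon$; and (I3) if $y$ is absent from both tables at time $t$, then $f^t_y \le q_t$. Given these, the lower bound $f^t_x \le \widehat{f}^t_x$ follows directly from the query rule (cases on Active / Passive-only / absent); the upper bound is immediate in the first two cases and, in the third case, follows from $\widehat{f}^t_x = q_t \le R_t\epsilon$ (Lemma~\ref{lem:quantile-size}) together with $f^t_x \ge 0$.

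For the inductive step, consider first an update event $(a_i, w_i)$. The procedure sets $Active[a_i] \leftarrow \widehat{f}^{t-1}_{a_i} + w_i$; combining the inductive bounds $f^{t-1}_{a_i} \le \widehat{f}^{t-1}_{a_i} \le f^{t-1}_{a_i} + R_{t-1}\epsilon$ with $f^t_{a_i} = f^{t-1}_{a_i} + w_i$ and $R_t \ge R_{t-1}$ gives (I1) for $a_i$. Every item $y \ne a_i$ has its stored values, its frequency, and its presence in each table unchanged, while $R_t$ has only grown, so (I1)--(I3) are preserved for it.

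Now consider a maintenance event, during which $f^t_\cdot$ and $R_t$ do not change. The new Active table is empty, so (I1) is vacuous; each item $y$ in the new Passive table retains its pre-maintenance stored value (either it sat in the old Active and is carried over by the swap, or it was copied from the old Passive with value strictly exceeding the freshly computed quantile $q^{\text{new}}$), so (I2) after maintenance is inherited from the pre-maintenance (I1)/(I2). The nontrivial part is (I3). For an item $y$ newly evicted during this maintenance, $y$ lay in the old Passive with $Passive[y] \le q^{\text{new}}$ and was absent from the old Active, so (I2) before maintenance gives $f^t_y \le Passive[y] \le q^{\text{new}} = q_t$. For an item $y$ that was already absent before the maintenance, the inductive bound $f^t_y \le q^{\text{old}}$ applies, and we must show $q^{\text{new}} \ge q^{\text{old}}$ to conclude $f^t_y \le q^{\text{new}} = q_t$.

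The main obstacle is therefore the monotonicity claim $q^{\text{new}} \ge q^{\text{old}}$. I would argue it by counting witnesses: at the moment $q^{\text{old}}$ was computed, the then-Passive table contained at least $\lceil 1/\epsilon\rceil$ items of value $\ge q^{\text{old}}$. Each such item either already sat in the then-Active table (where stored values only grow under updates) or is copied into it by Step~\ref{step:move}; consequently, after Steps~\ref{step:clear}--\ref{step:replace} they all appear in the new Passive table with value at least their previous value. Since Passive is not modified between maintenance events, these same witnesses remain present at the next quantile computation, guaranteeing that its $\lceil 1/\epsilon\rceil$-th largest value is at least $q^{\text{old}}$. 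The subtlest point is the boundary treatment of items whose value equals the quantile and are absent from the Active table; I would address this either through a careful tie-breaking convention in Step~\ref{step:move} or by observing that the quantile item itself contributes to the required count of witnesses.
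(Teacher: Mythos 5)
Your inductive skeleton is essentially the paper's own argument: induction over stream events with a case split on whether the item sits in the Active table, only in the Passive table, or in neither, handling updates by propagating the estimate and handling newly evicted items by observing that their Passive value was at most the freshly computed quantile. Where you genuinely diverge is that you have isolated a case the paper's proof skips: Step~\ref{step:quantile} \emph{does} change the estimate of every item already absent from both tables, since for such items the query returns $q$ by definition, so preserving the lower bound $f^t_y \le \widehat{f^t_y}$ for them requires the quantile to be non-decreasing across maintenance rounds. The paper asserts that Steps~\ref{step:quantile}, \ref{step:move} and~\ref{step:replace} ``do not change the estimation for any item'' and analyses only the items evicted in the current round, so it never proves this monotonicity; you are right that it is the crux of the inductive step.

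However, your proof of the monotonicity claim does not close, and the hole is exactly the one you flag at the end. Step~\ref{step:move} copies only items whose value is \emph{strictly} larger than the new quantile, so an item whose value equals the quantile and which is absent from the Active table is evicted. When several items tie at the quantile value, the number of witnesses of value at least $q^{\mathrm{old}}$ that survive into the next Passive table can fall strictly below $\lceil 1/\epsilon\rceil$ (in the extreme, to zero), and the next quantile can then drop below $q^{\mathrm{old}}$, breaking the lower bound for items evicted in earlier rounds. Your fallback of ``counting the quantile item itself'' does not help, because under the strict-inequality rule that item is precisely one of the evicted ones; and simply relaxing Step~\ref{step:move} to copy items with value $\ge q$ may move more than $\lceil 1/\epsilon\rceil - 1$ entries and upset the accounting behind Lemma~\ref{frameSize}. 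Note also that the entries inserted by updates during the generation cannot be used as substitute witnesses in general, since a flow resident in the old Passive table with a small counter enters the Active table with that small counter plus one weight. So the inductive step for items that were absent from both tables \emph{before} the maintenance remains unestablished; repairing it requires either a tie-breaking convention in Step~\ref{step:move} that provably carries forward enough large entries, or replacing pointwise monotonicity of $q$ with a direct argument that the new quantile still dominates the true frequency of every previously evicted item.
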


\begin{proof}
	We prove the theorem by induction over the number of elements seen $t$.\\
	\textbf{Basis:} At time $t=0$, the claim holds trivially.\\
	\textbf{Hypothesis:} Suppose that at time $t-1$ some element $x$ has an approximation
	$$f^{t-1}_x\le\widehat{f^{t-1}_x}\le f^{t-1}_x+R_{t-1}\epsilon.$$		
	\textbf{Step:} Assume that at time $t$ item $a_t$ arrives with weight $w_t$. If $a_t = x$, then the new estimation grows by exactly $w_t$, same as the true weight of $x$. According to the induction hypothesis, the estimation remains correct. Similarly, when another item arrives at time $t$, $f^{t-1}_x$ does not change, and neither does $\widehat{f^{t-1}_x}$.
	Steps~\ref{step:quantile} ,~\ref{step:move} and ~\ref{step:replace} do not change the estimation for any item. 

	During step~\ref{step:clear}, however, $x$ could be removed from the Passive table.
	If $x$ is also in the Active table, $\widehat{f}^{t}_x$ remains the same.
	Otherwise, $\widehat{f}^{t}_x$ is changed from $\widehat{f^{t-1}_x}$ to $q$.
	However, since $x$ only appears in the Passive table, we deduce that it was not moved in step~\ref{step:move}.
	This means that the estimation for $x$ at the beginning of the generation was smaller or equal to $q$.
	Therefore, it is immediate from the induction hypothesis that $f^{t-1}_x\le q$.
	After $x$ is removed from the Passive table as well, the estimation becomes $\widehat{f^t_x}=q$.
	Hence, $f^{t}_x=f^{t-1}_x\le q=\widehat{f^t_x}$.
	
	Lemma~\ref{lem:quantile-size} proves that $q\le f^t_x+R_t\epsilon$ and therefore we have
	$f^{t}_x\le\widehat{f^{t}_x}\le f^{t}_x+R_t\epsilon$.
	
\end{proof}

Next, we prove that our algorithm is also able to identify the elephant flows.
We note that the runtime is optimal as there could be $\Omega(\frac{1}{\epsilon})$ elephant flows.

\begin{theorem}
	At any time $t$, the $Elephants()$ query solves \heavyHitters{} in $O(\frac{1}{\epsilon})$ time.
\end{theorem}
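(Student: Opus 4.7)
The plan is to derive the theorem directly from Theorem~\ref{thm:correctness} (the error guarantee on $\widehat{f}^t_x$) together with the size bound on the Active and Passive tables. The $Elephants()$ procedure iterates over the union of the two tables and inserts into $S_t$ every $x$ with $Query(x) \ge R_t\cdot\theta$, so there are two correctness obligations (completeness and soundness) and one runtime obligation.

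For \emph{soundness}, I would take any $x$ with $f^t_x < R_t(\theta-\epsilon)$ and apply the upper bound from Theorem~\ref{thm:correctness}: $\widehat{f}^t_x \le f^t_x + R_t\epsilon < R_t\theta$, so $x$ is never added to $S_t$, regardless of whether it is in the tables or not. For \emph{completeness}, I would take any $x$ with $f^t_x > R_t\theta$ and argue it must appear in one of the tables. The slick way is by contradiction: if $x$ were in neither table, then $Query(x)$ returns $q_t$, and Theorem~\ref{thm:correctness} gives $f^t_x \le \widehat{f}^t_x = q_t$; but Lemma~\ref{lem:quantile-size} forces $q_t \le R_t\epsilon < R_t\theta < f^t_x$, a contradiction. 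Hence $x$ is encountered during the traversal, and because $\widehat{f}^t_x \ge f^t_x > R_t\theta$ by the lower bound of Theorem~\ref{thm:correctness}, it is added to $S_t$.

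For the runtime, I would note that the Active and Passive tables each contain at most $T = \lceil\gamma/\epsilon\rceil + \lceil 1/\epsilon\rceil - 1 = O(1/\epsilon)$ entries when $\gamma$ is a constant, and that $Query(x)$ runs in $O(1)$. Traversing both tables and evaluating $Query$ on each entry therefore costs $O(1/\epsilon)$ in total; the initialization $S_t \leftarrow \emptyset$ and insertions into $S_t$ add at most $O(1/\epsilon)$ as well.

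The only subtle step is the completeness argument; everything else is a direct substitution. The (minor) obstacle is realising that we do not need to track \emph{where} an elephant is stored — it suffices that the estimation guarantee of Theorem~\ref{thm:correctness} combined with the quantile bound of Lemma~\ref{lem:quantile-size} forces every true elephant to sit in at least one of the two tables, so the linear scan cannot miss it.
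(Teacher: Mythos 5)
Your proposal is correct and follows essentially the same route as the paper's own proof: the same split into completeness, soundness, and runtime, with completeness obtained by combining the lower bound of Theorem~\ref{thm:correctness} with Lemma~\ref{lem:quantile-size} to show a true elephant cannot be absent from both tables, and soundness by the upper bound $\widehat{f}^t_x \le f^t_x + R_t\epsilon < R_t\theta$. The only cosmetic difference is that you phrase completeness as an explicit contradiction while the paper states it directly (both tacitly using $\theta \ge \epsilon$).
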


\begin{proof}
	Queries are done by traversing the tables and observing all elements $x$ such that $\widehat{f^{t}_x}>q$.
	The tables are of size $O(\frac{1}{\epsilon})$ when $\gamma$ is constant and therefore the complexity is $O(\frac{1}{\epsilon})$.
	
	Let $x$ be such that $f^t_x>R_t\cdot \theta$.
	According to Theorem~\ref{thm:correctness} and Lemma~\ref{lem:quantile-size}, $\widehat{f^{t}_x}\ge f^t_x>R_t\cdot \theta\ge q$.
	Hence, $x$ must appear in one of the tables with frequency greater than $R_t\cdot \theta$.
	Otherwise, it would have an estimation of less than $q$.
	Consequently, $x\in S_t$.
	
	Next, let $x$ be such that $f^t_x<R_t\cdot (\theta-\epsilon)$.
	According to Theorem~\ref{thm:correctness}, $\widehat{f^{t}_x}\le f^t_x+R_t\cdot\epsilon<R_t\cdot (\theta-\epsilon)+R_t\cdot\epsilon=R_t\cdot\theta$.
	Therefore, $x$ will not be included in $S_t$.
\end{proof}

\subsection{Runtime}
We now prove that IM-SUM runs at $O(1)$ amortized time and that DIM-SUM runs at $O(1)$ worst case time.
Our goal is to evaluate both the complexity of the maintenance and the number of update operations between two consecutive maintenance operations.
We start with evaluating the minimal number of update operations between two consecutive generations, i.e., the number of unique flows that must be encountered before we switch to a new generation.

\ifdefined \NINEPAGES
For lack of space, the proofs of the following two technical lemmas are deferred to the full version of this paper~\cite{dimsum-full}.
\fi
\begin{lemma}
	\label{frameSize}
	Denote by $G_{min}$ the minimal number of update operations between subsequent generations: $G_{min} = \left\lceil\frac{\gamma}{\epsilon}\right\rceil$.
\end{lemma}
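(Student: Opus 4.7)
The plan is to lower bound the number of updates within a generation by decomposing the $T$ entries that the Active table accumulates into those introduced by $Update$ operations versus those copied in by step~\ref{step:move} of the maintenance routine.

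I would start from two structural observations. First, immediately after step~\ref{step:replace} swaps the tables, the new Active table is empty: step~\ref{step:clear} has just cleared what is about to become the Active table, so it holds no entries. Second, by construction the next generation ends precisely when Active has again filled to its capacity of $T=\lceil \gamma/\epsilon\rceil+\lceil 1/\epsilon\rceil-1$ entries. Hence, over the course of a generation, Active grows from $0$ to $T$ entries.

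Next I would account for these $T$ entries. Active gains an entry through exactly two mechanisms: (a) an $Update$ for a flow not already resident in Active (an update to a resident flow merely refreshes its counter and creates no new slot); (b) step~\ref{step:move}, which copies from Passive only those items whose counters strictly exceed the quantile $q$. Since $q$ is defined as the $\lceil 1/\epsilon\rceil$-th largest counter in Passive, mechanism (b) can contribute at most $\lceil 1/\epsilon\rceil-1$ entries in a single generation. Consequently mechanism (a) must contribute at least $T-(\lceil 1/\epsilon\rceil-1)=\lceil \gamma/\epsilon\rceil$ entries, and since each update produces at most one entry via (a), at least $\lceil \gamma/\epsilon\rceil$ updates take place. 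The matching upper bound, needed to make the minimum tight, is witnessed by a stream of fresh distinct flows each of which adds exactly one new slot per update.

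The point requiring the most care is the deamortized setting of Section~\ref{sec:alg-de-amortized}, where steps~\ref{step:move} and~\ref{step:clear} are interleaved with updates rather than executed in a single burst at the end of the generation. The cap of $\lceil 1/\epsilon\rceil-1$ new entries from step~\ref{step:move} is unaffected by this interleaving, and the clause ``except for the items that already appear in the Active table'' in step~\ref{step:move} precisely rules out double-counting between mechanisms (a) and (b). One must also verify that the scheduling guarantees that steps~\ref{step:quantile}--\ref{step:clear} complete by the time Active reaches $T$ entries; this is exactly the invariant that the controlled-execution and double-threading techniques of that section are engineered to preserve.
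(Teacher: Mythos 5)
Your proposal is correct and follows essentially the same argument as the paper: the Active table starts a generation empty, step~\ref{step:move} contributes at most $\left\lceil\frac{1}{\epsilon}\right\rceil-1$ entries, and the remaining $T-\left(\left\lceil\frac{1}{\epsilon}\right\rceil-1\right)=\left\lceil\frac{\gamma}{\epsilon}\right\rceil$ entries must each come from a distinct update operation. The additional remarks on tightness and on the deamortized interleaving are sensible but not part of the paper's (shorter) proof.
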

\ifdefined \EXTENDED
\begin{proof}
	In the beginning of every generation the Active table is empty. During step~\ref{step:move}, the Active table receives up to $\left\lceil\frac{1}{\epsilon}\right\rceil-1$ items from the Passive table. Other than that, the Active table only grows due to update operations. Therefore, there must be at least
	\begin{multline*}
		G_{min}=T-\left(\left\lceil\frac{1}{\epsilon}\right\rceil-1\right)\\
		=\left\lceil\frac{\gamma}{\epsilon}\right\rceil+\left\lceil\frac{1}{\epsilon}\right\rceil-1-\left(\left\lceil\frac{1}{\epsilon}\right\rceil-1\right)=\left\lceil\frac{\gamma}{\epsilon}\right\rceil
	\end{multline*}
	updates in each generation.
\end{proof}
\fi

Next, we bound the number of hash table operations required to perform the maintenance process.
\begin{lemma}\label{lemma:maintenance}
	The maintenance process requires:
	$M=O\left(\frac{1+\gamma}{\epsilon}\right)$ hash table operations.
\end{lemma}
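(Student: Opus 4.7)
The plan is to bound the cost of each of the four maintenance steps separately and then sum the bounds, using the fact that each table stores at most $T = \left\lceil\frac{\gamma}{\epsilon}\right\rceil+\left\lceil\frac{1}{\epsilon}\right\rceil-1 = O\!\left(\frac{1+\gamma}{\epsilon}\right)$ entries.

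For Step~\ref{step:quantile}, I would invoke a deterministic linear-time selection algorithm (e.g., median of medians) on the at most $T$ counters of the Passive table, which yields the $\lceil 1/\epsilon\rceil^{th}$ largest in $O(T) = O\!\left(\frac{1+\gamma}{\epsilon}\right)$ operations. For Step~\ref{step:move}, the key observation is that by the definition of $q$ as the $\lceil 1/\epsilon\rceil^{th}$ largest entry, at most $\lceil 1/\epsilon\rceil - 1$ Passive entries strictly exceed $q$, so no more than $O(1/\epsilon)$ hash-table insertions into the Active table are performed, each costing $O(1)$. For Step~\ref{step:clear}, clearing the Passive table requires at most $O(T)=O\!\left(\frac{1+\gamma}{\epsilon}\right)$ operations in the worst case (and $O(1)$ if the hash table supports constant-time flush). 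Step~\ref{step:replace} merely swaps two pointers and therefore costs $O(1)$.

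Summing the four bounds gives $M = O\!\left(\frac{1+\gamma}{\epsilon}\right)$, which is exactly the claim. No step strictly dominates, and none of the steps presents a real obstacle: the main subtlety is simply ensuring that Step~\ref{step:quantile} is done deterministically in linear time (ruling out a naive sort that would cost $O\!\left(\frac{1+\gamma}{\epsilon}\log\frac{1+\gamma}{\epsilon}\right)$) and observing that Step~\ref{step:move} touches only the items above the quantile rather than the full table, which follows directly from the definition of the quantile.
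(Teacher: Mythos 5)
Your proof is correct and follows essentially the same route as the paper: bound each of the four maintenance steps separately (linear-time deterministic selection for the quantile, at most $O\bigl(\frac{1+\gamma}{\epsilon}\bigr)$ work for moving and clearing, $O(1)$ for the swap) and sum. The only cosmetic difference is that you bound Step~\ref{step:move} more tightly via the quantile definition ($O(1/\epsilon)$ insertions) where the paper simply charges a full pass over the Passive table, which does not change the final bound.
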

\ifdefined \EXTENDED
\begin{proof}
	We calculate the complexity of all steps. For Step~\ref{step:quantile}, it is well known that a median can be found in linear time, e.g., using the median of medians algorithm.
	The same is true for any percentile, and therefore Step~\ref{step:quantile} requires $$O\left(\frac{1+\gamma}{\epsilon}\right)$$ hash table operations to complete.
	
	Step~\ref{step:move} goes over the Passive table, which is of size $O\left(\frac{1+\gamma}{\epsilon}\right)$, and copies or moves some of the items to the Active table. Therefore, its complexity is
	$O\left(\frac{1+\gamma}{\epsilon}\right)$.
	
	Step~\ref{step:clear} can be performed in $O(1)$ and so does Step~\ref{step:replace}.
	Therefore, the total complexity of all steps is:  $$O\left(\frac{1+\gamma}{\epsilon}\right).$$
\end{proof}
\fi

\begin{theorem}
	For any fixed $\gamma$, IM-SUM runs in $O(1)$ amortized and DIM-SUM in $O(1)$ worst case hash table~operations.
\end{theorem}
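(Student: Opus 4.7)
The plan is to combine the two technical lemmas directly. By Lemma~\ref{lemma:maintenance} each maintenance phase consists of at most $M = O((1+\gamma)/\epsilon)$ hash-table operations, and by Lemma~\ref{frameSize} any two consecutive maintenance phases are separated by at least $G_{min} = \lceil \gamma/\epsilon \rceil$ update operations. I would first handle the per-update work outside of maintenance: a single $Update$ performs one $Query$ (three hash-table lookups of Active, Passive, and a comparison with $q$) and one insertion into Active, hence $O(1)$ hash-table operations.

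For IM-SUM I would then argue amortization: over a full generation, the total cost is the per-update cost summed over at least $G_{min}$ updates, plus the cost $M$ of the single maintenance burst performed at the end. Dividing, the amortized cost per update is bounded by
\[
O(1) + \frac{M}{G_{min}} \;=\; O(1) + O\!\left(\frac{(1+\gamma)/\epsilon}{\gamma/\epsilon}\right) \;=\; O\!\left(1 + \tfrac{1}{\gamma}\right),
\]
which is $O(1)$ since $\gamma$ is a fixed constant.

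For DIM-SUM I would use the controlled execution scheme described in Section~\ref{sec:alg-de-amortized}: each incoming update performs $\lceil M/G_{min}\rceil = O(1 + 1/\gamma) = O(1)$ maintenance operations in addition to its own $O(1)$ work. The main obstacle here is not the arithmetic but showing correctness of the de-amortization, namely that the maintenance actually completes before the Active table overflows. For this I would invoke Lemma~\ref{frameSize}: because at least $G_{min}$ updates must occur before Active can fill again, and each such update discharges $\lceil M/G_{min}\rceil$ maintenance operations, the full budget of $M$ operations is exhausted no later than the moment Active would reach size $T$. The dynamic-recalculation variant (recomputing $\lceil M/U\rceil$ before every update) only tightens this bound since $U \le G_{min}$ at all times, so the worst-case work per update stays $O(1 + 1/\gamma)$.

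Putting the two cases together yields the claimed bounds for any fixed $\gamma$. The only subtlety worth spelling out in the write-up is the invariant that the Active table never exceeds $T$ entries under the de-amortized schedule; everything else follows from plugging Lemmas~\ref{frameSize} and~\ref{lemma:maintenance} into the appropriate amortized or worst-case accounting.
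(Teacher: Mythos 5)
Your IM-SUM argument is exactly the paper's: charge the single end-of-generation maintenance burst of $M=O((1+\gamma)/\epsilon)$ operations against the at least $G_{min}=\lceil\gamma/\epsilon\rceil$ updates of that generation (Lemmas~\ref{lemma:maintenance} and~\ref{frameSize}), giving $O(1+1/\gamma)=O(1)$ amortized. Your static controlled-execution bound for DIM-SUM ($\lceil M/G_{min}\rceil=O(1+1/\gamma)$ operations per update, with completion before the Active table fills guaranteed by Lemma~\ref{frameSize}) is also sound and matches the first half of the paper's discussion.

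The gap is in your one-line dismissal of the dynamic-recalculation variant. You claim it ``only tightens this bound since $U \le G_{min}$ at all times,'' but that inequality points the wrong way: with the remaining maintenance budget $M_t$ held fixed, a \emph{smaller} $U_t$ makes $\lceil M_t/U_t\rceil$ \emph{larger}, so a priori the per-update work under the dynamic schedule could grow as the generation progresses, not shrink. What $U\le G_{min}$ does give you is that maintenance completes in time (the dynamic schedule discharges at least as much per step as the static one), but it does not give the worst-case $O(1)$ per-update bound, which is the actual content of the theorem for DIM-SUM. The paper closes this with an induction on the ratio $M_t/U_t$: after one update, $U_{t+1}\ge U_t-1$ while $M_{t+1}\le M_t-\frac{M_t}{U_t}=M_t\bigl(1-\frac{1}{U_t}\bigr)$, hence
\[
\frac{M_{t+1}}{U_{t+1}}\;\le\;\frac{M_t\bigl(1-\frac{1}{U_t}\bigr)}{U_t-1}\;=\;\frac{M_t}{U_t},
\]
so the ratio is non-increasing from its initial value $M/G_{min}=O(1)$ and every update performs $O(1)$ maintenance operations. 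You need this (or an equivalent invariant) to cover the algorithm as actually specified in Section~\ref{sec:alg-de-amortized}; the rest of your write-up stands.
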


\begin{proof}
	First, we observe that for any constant $\gamma$, Lemma~\ref{frameSize} establishes that there are $O(\frac{1}{\epsilon})$ update operations between generations.
	Similarly, by Lemma~\ref{lemma:maintenance}, there are $O(\frac{1}{\epsilon})$ hash function operations to do during each maintenance process.
	
	In IM-SUM, maintenance is performed once per generation and thus the amortized complexity is $O(1)$.
	
	In DIM-SUM, as the work is split between update operations, each update has to perform $O(1)$ hash table operations.
	Let $U_t$ be the minimum number of update operations left in the generation at time $t$ and let $M_t$ be the maximum number of maintenance operations left.
	We show by induction that $\frac{M_t}{U_t}=O(1)$.
	At a $t$ in the beginning of the generation, $U_t=G_{min}$ and $M_t=M$.
	Therefore, $\frac{M_t}{U_t}=O(1)$.
	
	Assume, by induction, that for any time $t$, $\frac{M_t}{U_t}=O(1)$.
	After a single update, $U_{t+1}\ge U_t-1$, because otherwise the minimum number of updates at time $t$ would be smaller than $U_t$.
	$M_t$ decreases to at most $M_{t+1}\le M_t-\frac{M_t}{U_t}=M_t\left(1-\frac{1}{U_t}\right)$.
	Therefore, the number of maintenance operations executed at time $t+1$ is $$O\left(\frac{M_{t+1}}{U_{t+1}}\right)\le O\left(\frac{M_t\left(1-\frac{1}{U_t}\right)}{U_t-1}\right)=O\left(\frac{M_t}{U_t}\right)=O(1).$$
	
	The query process requires 2 hash table operations and is therefore also $O(1)$.
\end{proof}

\subsection{Required Space}
We now show that for each constant $\gamma$, IM-SUM and DIM-SUM require the (asymptotically) optimal $O(\frac{1}{\epsilon})$ table entries.

\begin{theorem}
	For each constant $\gamma$, the space complexity of IM-SUM and DIM-SUM is $O(\frac{1}{\varepsilon})$.
\end{theorem}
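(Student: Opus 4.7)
The plan is to trace back through the algorithm's data structures and bound each one, then sum the contributions. The only state maintained by IM-SUM and DIM-SUM is the Active table, the Passive table, and the scalar quantile estimate $q$, so it suffices to bound the size of each hash table by $O(1/\epsilon)$ under constant $\gamma$.

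First, I would recall that each table is configured to hold at most $T \triangleq \left\lceil \gamma/\epsilon \right\rceil + \left\lceil 1/\epsilon \right\rceil - 1$ entries. For the Active table, I would argue by induction on generations: at the start of a generation it is empty; Step~\ref{step:move} of the maintenance copies at most $\left\lceil 1/\epsilon \right\rceil - 1$ entries into it from the Passive table; and the only other way the Active table grows is through update operations, of which at most $G_{\min} = \left\lceil \gamma/\epsilon \right\rceil$ can occur before the generation ends and the tables are swapped (this is exactly the bound from Lemma~\ref{frameSize}, now used in reverse: the generation boundary is defined by the Active table reaching size $T$). Hence the Active table never exceeds $T$ entries.

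For the Passive table, I would observe that just after the swap in Step~\ref{step:replace} the Passive table contains exactly the contents of the old Active table, which has at most $T$ entries by the previous argument; and no step of the algorithm ever adds to the Passive table (updates write only to the Active table, and maintenance only removes from the Passive table). So the Passive table is also bounded by $T$.

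For DIM-SUM, the de-amortized variant, I would add the brief remark that the proof of the running-time theorem already ensures maintenance finishes before the Active table overflows, so the bound $T$ holds in the worst case there too. Combining everything, the total space is $2T + O(1) = O\!\left(\frac{1+\gamma}{\epsilon}\right)$, which for any constant $\gamma$ is $O(1/\epsilon)$, matching the lower bound of $\Omega(1/\epsilon)$ cited earlier and making the space asymptotically optimal. I do not expect a significant obstacle here; the result is essentially a bookkeeping consequence of the table-size parameter $T$ and Lemma~\ref{frameSize}, with the only subtle point being to note that constancy of $\gamma$ is what absorbs the $(1+\gamma)$ factor into the $O(\cdot)$.
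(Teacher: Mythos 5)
Your proposal is correct and takes essentially the same route as the paper, whose own proof is a one-liner observing that each table is sized to hold at most $T=\left\lceil\frac{\gamma}{\epsilon}\right\rceil+\left\lceil\frac{1}{\epsilon}\right\rceil-1$ entries, so the total is $O\!\left(\frac{1+\gamma}{\epsilon}\right)=O(\frac{1}{\epsilon})$ for constant $\gamma$. You simply make explicit the bookkeeping (the induction over generations and the swap argument for the Passive table) that the paper leaves implicit.
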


\begin{proof}
	The proof follows from the selected table sizes.
	Each table is sized to contain a maximum of $\frac{\gamma}{\epsilon} + \frac{1}{\epsilon}$ entries and therefore the total number of table entries is $O(\frac{1}{\epsilon})$.
\end{proof} 
\section{Evaluation}
\begin{figure*}[t]
	\begin{tabular}{ccc}
		\subfloat[Chicago]{\includegraphics[scale=0.5] 
			{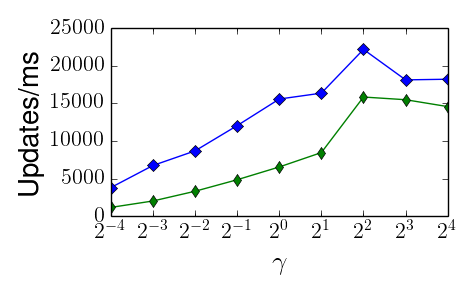}} &	
		\subfloat[SanJose]{\includegraphics[scale=0.5] 
			{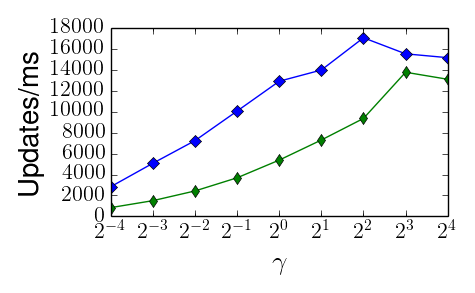}} &	
		\subfloat[YouTube]{\includegraphics[scale=0.5] 
			{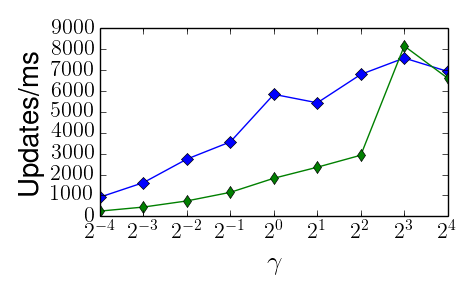}} \\	
		\subfloat[UCLA-TCP]{\includegraphics[scale=0.5] 
			{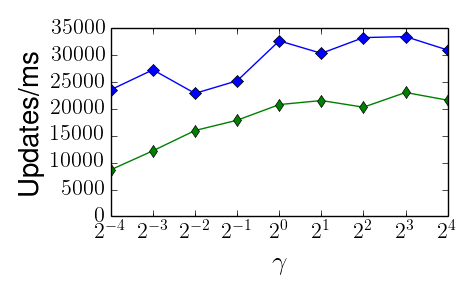}} &
		\subfloat[Legend]{\includegraphics[scale=0.5] 
			{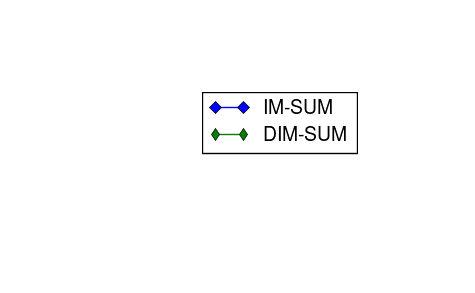} \label{fig:legend2}} 				& 		
		\subfloat[U1CLA-UDP]{\includegraphics[scale=0.5] 
			{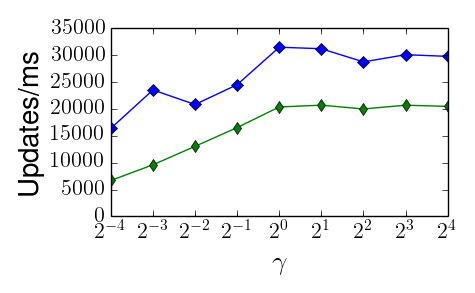}}\\
		\subfloat[Zipf0.7]{\includegraphics[scale=0.5] 
			{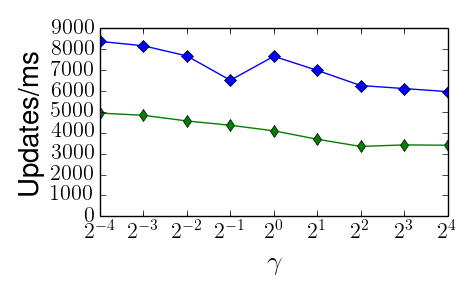}} 	
		&
		\subfloat[Zipf1]{\includegraphics[scale=0.5] 
			{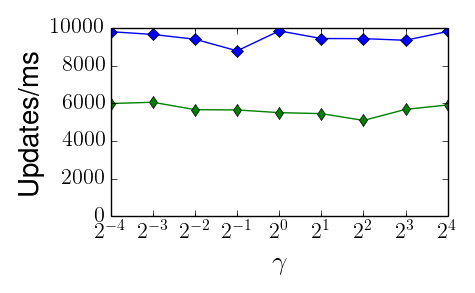}}
		&
		\subfloat[Zipf1.3]{\includegraphics[scale=0.5] 
			{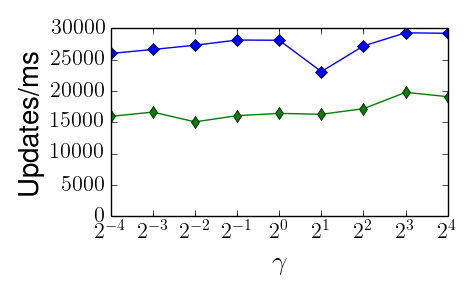}}

	\end{tabular}
	
	\caption{	\label{fig:gamma}The effect of the parameter $\gamma$ on the performance of the algorithms over different traces ($\epsilon = 2^{-15}$).}
\end{figure*}

\begin{figure*}[tp!]
	\begin{tabular}{ccc}
		\subfloat[Chicago]{\includegraphics[scale=0.5] 
			{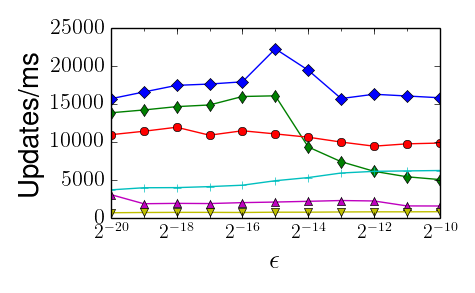}} &
		\subfloat[SanJose]{\includegraphics[scale=0.5] 
			{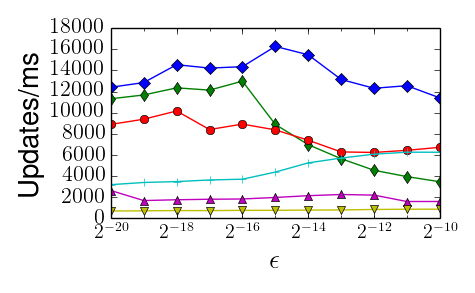}} &	
		\subfloat[YouTube]{\includegraphics[scale=0.5] 
			{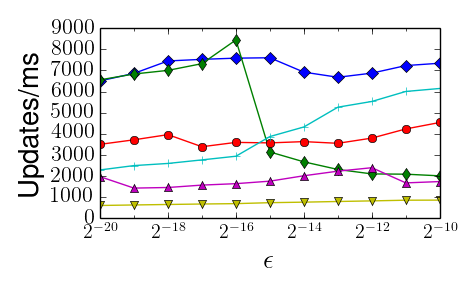}}\\	
		\subfloat[UCLA-TCP]{\includegraphics[scale=0.5] 
			{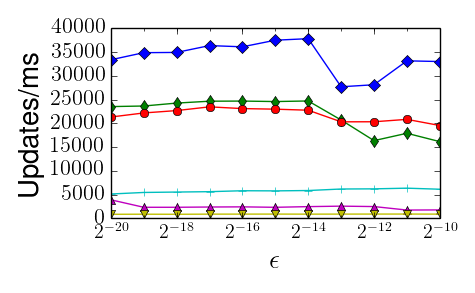}} &
		\subfloat[legend]{\includegraphics[scale=0.5] 
			{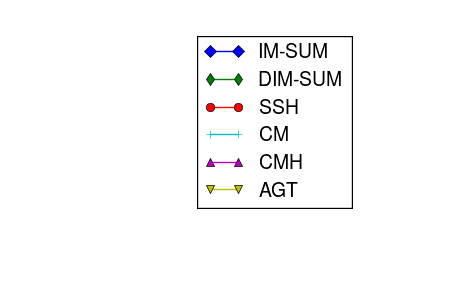}}&
		\subfloat[UCLA-UDP]{\includegraphics[scale=0.5] 
			{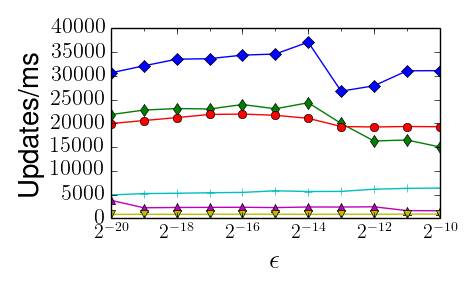}}

	\end{tabular}
	\caption{\label{fig:epsilon} Comparison of the number of updates/ms performed by the different algorithms for a given error bound.}
\end{figure*}

We now present an extensive evaluation for comparing our proposed algorithm with the leading alternatives.
We only considered alternative algorithms that are able to count packets of variable sizes and are able to answer queries on-line.

\subsection{Datasets}
Our evaluation includes the following datasets:
\begin{enumerate}
	\item The CAIDA Chicago Anonymized Internet Trace 2015~\cite{CAIDA} , denoted \emph{Chicago}. The trace was collected by the `equinix-chicago' high-speed monitor and contains a mix of 6.3M TCP, UDP and ICMP packets. The weight of each packed is defined as the size of its payload, not including the header.
	\item The CAIDA San Jose Anonymized Internet Trace 2014~\cite{CAIDASJ} , denoted \emph{SanJose}. The trace was collected by the `equinix-sanjose' high-speed monitor and contains a mix of 20.2M TCP, UDP and ICMP packets.
	\ifdefined \EXTENDED
	The weight of each packet is defined as the size of its payload, not including the header.	
	\fi
	\ifdefined \NINEPAGES
	The weight of each packet is defined as in \emph{Chicago}.
	\fi
	\item The UCLA Computer Science department TCP packet trace (denoted \emph{UCLA-TCP})~\cite{UCLA}. This trace contains 16.5M TCP packets passed through the border router of the Computer Science Department, University of California, Los Angeles.
	\ifdefined \EXTENDED
	The weight of each packet is defined as the size of its payload, not including the header.	
	\fi
	\ifdefined \NINEPAGES
	Weights of packets is defined as in \emph{Chicago}.
	\fi
	\item \emph{UCLA-UDP}~\cite{UCLA}. This trace contains 18.5M UDP packets passed through the border router of the Computer Science Department, UCLA.
	\ifdefined \EXTENDED
	The weight of each packet is defined as the size of its payload, not including the header.	
	\fi
	\ifdefined \NINEPAGES
	Weights are defined as in \emph{Chicago}.
	\fi
	\item YouTube Trace (referred to as \emph{YouTube})~\cite{youtube}. The trace contains a sequence of 436K accesses to YouTube's videos made from within the University of Massachusetts Amherst. The weight of each video is defined as its length (in seconds).
	\ifdefined \EXTENDED
	An example application for such measurement is the caching of videos according to their bandwidth usage to reduce traffic.
	\fi
	\ifdefined \NINEPAGES
	An example application for this is caching videos according to their bandwidth usage.
	\fi
	\item (Unweighted) Zipf streams that contain a series of i.i.d elements sampled according to a Zipfian distribution of a given skew. We denote a Zipf stream with skew $X$ as \emph{Zipf$X$}.
	Each packet is of weight $1$.
\end{enumerate}

\begin{figure*}[tbp]
	\centering
	\begin{tabular}{ccc}
		\subfloat[$\epsilon=2^{-10}$]{\includegraphics[scale=0.5 ]
			{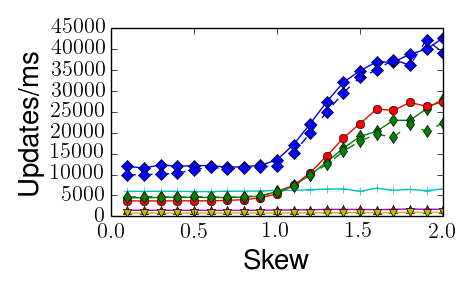}} &
		\subfloat[Legend]{\includegraphics[scale=0.5 ]
			{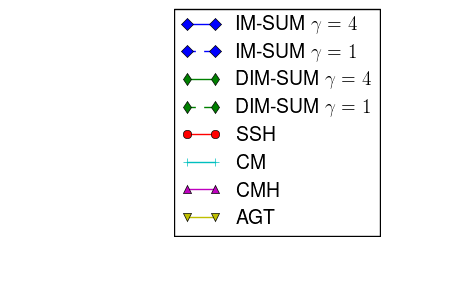}}&
		\subfloat[$\epsilon=2^{-20}$]{\includegraphics[scale=0.5] 
			{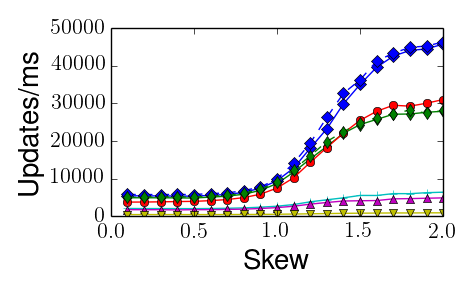}}
	\end{tabular}
	\caption{\label{fig:skew} Comparison of the number of updates/ms performed by the different algorithms for a given skew.}
\end{figure*}

\subsection{Implementation}
We compare both IM-SUM and DIM-SUM~\cite{OpenSourceAnonimous} to Count Min Sketch (CM), a Hierarchical CM-sketch (CMH), which is an extended version of CM suited to find frequent items~\cite{CMH}, a robust implementation of the Count Sketch that has been extended using Adaptive Group Testing by~\cite{SpaceSavingIsTheBest} to support finding Frequent Items (AGT), and a heap based implementation of Space Saving (denoted SSH)~\cite{SpaceSavingIsTheBest2010,SpaceSavingIsTheBest2009}.
The code for the competing algorithms is the one released by~\cite{SpaceSavingIsTheBest}.
All implementations are in C++ and the measurements were made on a 64-bit laptop with a 2.30GHz CPU, 4.00 GB RAM, 2 cores and 4 logical cores.
In our own algorithms, we used the same hash table that was used to implement SSH for fairness.

\subsection{The effect of $\gamma$}

Recall that $\gamma$ is a performance parameter; increasing $\gamma$ increases our space requirement but also makes maintenance operations infrequent and therefore potentially improves performance.
Figure~\ref{fig:gamma} shows the performance of IM-SUM and DIM-SUM for values of $\gamma$ ranging from $2^{-4}$ to $2^{4}$ and $\epsilon=2^{-15}$.
The YouTube trace seems to benefit from larger $\gamma$ until $\gamma =8$, Chicago until $\gamma =4$, and UCLA seems to saturate around $\gamma =1$.
SanJose peaks for IM-SUM at $\gamma=4$ and $\gamma=8$ for DIM-SUM.
Zipf1.3 increases with $\gamma$ until $\gamma=8$.
The Zipf0.7 and Zipf1 traces, however, slightly decrease with $\gamma$.
We explain the decline by an increase in memory consumption, which slows down memory access.
The decline is probably not evident in other traces, because in high skew traces the effect of reducing the frequency of the maintenance procedure is larger than the effect of decreasing the memory requirement.
In order to balance speed and space efficiency, we continue our evaluation with $\gamma=4$.
We also recommend this setting for traces with unknown characteristics.
If additional memory is available, reducing $\epsilon$ is more effective than increasing $\gamma$.

It is also evident that our de-amortization process has significant overheads.
We attribute this to the synchronization overhead between the two threads of DIM-SUM.

\subsection{Effect of $\epsilon$}

Since memory consumption is coupled to accuracy~$\epsilon$, it is interesting to evaluate the operation speed as a function of $\epsilon$.
As IM-SUM and DIM-SUM are asymptotically faster, we expect them to perform better than known approaches for small $\epsilon$s (that imply a large number of counters).
Figure~\ref{fig:epsilon} depicts the operation speed of the algorithms,
as a function of $\epsilon$.
As shown, IM-SUM is considerably faster than all alternatives in all tested workloads, even for large values of $\epsilon$.
The results for DIM-SUM are mixed.
For large $\epsilon$ values, it is mostly slower than the alternatives.
However, for small $\epsilon$s, it is consistently better than previous works.

\subsection{Effect of Skew on operation speed}

Intuitively, the higher the skew of the workload, the less frequent maintenance operations occur, because it takes longer before we see enough unique flows for the Active table to fill up.
Figure~\ref{fig:skew} demonstrates the effect of the zipf skew parameter on performance.
As can be observed, while CMH and AGT are quite indifferent to skew, IM-SUM and DIM-SUM benefit greatly from it and IM-SUM is faster than all alternatives.
DIM-SUM is consistently faster than CMH and AGT, and has similar speed to SSH. CM is slightly faster than DIM-SUM for low skew workloads but more than three times slower than DIM-SUM when the skew is high.


\section{Discussion}
In this paper, we have shown two variants of the first (asymptotically) space optimal algorithm that can estimate the total traffic volume of every flow as well as identify the elephant flows (in terms of their total byte count).
The first variant, IM-SUM, is faster on the average case but only ensures $O(1)$ amortized execution time, while the second variant, DIM-SUM, offers $O(1)$ worst case time guarantee.
We have benchmarked our algorithms on both synthetic and real-world traces and have demonstrated their superior performance.

Looking into the future, we hope to further reduce the actual running time of DIM-SUM to match the rate of IM-SUM.
This would probably require redesigning our code to eliminate synchronization, possibly by using an existing concurrent hash table implementation.
A C++ based open source implementation of this work and all other code used in this paper is available at~\cite{OpenSourceAnonimous}. 

\bibliographystyle{plain}
\bibliography{references}

\begin{thebibliography}{10}

\bibitem{OpenSourceAnonimous}
{IM-SUM and DIM-SUM - Open source implementation}.
\newblock https://github.com/kassnery/dimsum.

\bibitem{CAIDA}
{The CAIDA UCSD Anon. Chichago Internet Traces 2015, Dec. 17th}.

\bibitem{CAIDASJ}
{The CAIDA UCSD Anon. San Jose Internet Traces 2014, June 19th}.

\bibitem{UCLA}
Unpublished, see http://www.lasr.cs.ucla.edu/ddos/traces/.

\bibitem{SamplingSDN}
Yehuda Afek, Anat Bremler-Barr, Shir Landau~Feibish, and Liron Schiff.
\newblock Sampling and large flow detection in sdn.
\newblock In {\em ACM SIGCOMM}, pages 345--346, 2015.

\bibitem{DynamicFlow}
Mohammad Al-Fares, Sivasankar Radhakrishnan, Barath Raghavan, Nelson Huang, and
  Amin Vahdat.
\newblock Hedera: Dynamic flow scheduling for data center networks.
\newblock In {\em USENIX NSDI}, 2010.

\bibitem{WCSS}
Ran Ben-Basat, Gil Einziger, Roy Friedman, and Yaron Kassner.
\newblock Heavy hitters in streams and sliding windows.
\newblock In {\em IEEE INFOCOM}, 2016.

\bibitem{RAP}
Ran Ben{-}Basat, Gil Einziger, Roy Friedman, and Yaron Kassner.
\newblock Randomized admission policy for efficient top-k and frequency
  estimation.
\newblock {\em CoRR}, abs/1612.02962, 2016.

\bibitem{TrafficEngeneering}
Theophilus Benson, Ashok Anand, Aditya Akella, and Ming Zhang.
\newblock Microte: Fine grained traffic engineering for data centers.
\newblock In {\em ACM CoNEXT}, pages 8:1--8:12, 2011.

\bibitem{BerindeCormodeIndykStrauss09}
Radu Berinde, Graham Cormode, Piotr Indyk, and Martin Strauss.
\newblock Space-optimal heavy hitters with strong error bounds.
\newblock In {\em {ACM PODS}}, 2009.

\bibitem{CountSketch}
Moses Charikar, Kevin Chen, and Martin Farach-Colton.
\newblock Finding frequent items in data streams.
\newblock {\em Theor. Comput. Sci.}, 312(1):3--15, January 2004.

\bibitem{CounterTree}
Min Chen and Shigang Chen.
\newblock Counter tree: {A} scalable counter architecture for per-flow traffic
  measurement.
\newblock In {\em IEEE ICNP}, pages 111--122, 2015.

\bibitem{CMH}
Pascal Cheung-Mon-Chan and Fabrice Cl{\'e}rot.
\newblock Finding hierarchical heavy hitters with the count min sketch.
\newblock In {\em Proceedings of 4th International Workshop on Internet
  Performance, Simulation, Monitoring and Measurement, IPS-MOME}, 2006.

\bibitem{SpaceSavingIsTheBest}
Graham Cormode and Marios Hadjieleftheriou.
\newblock Finding frequent items in data streams.
\newblock {\em VLDB 2008.}, 1(2):1530--1541, August 2008.
\newblock Code: http://hadjieleftheriou.com/frequent-items.

\bibitem{SpaceSavingIsTheBest2010}
Graham Cormode and Marios Hadjieleftheriou.
\newblock Methods for finding frequent items in data streams.
\newblock {\em J. of VLDB}, 19(1):3--20, 2010.

\bibitem{CMSketch}
Graham Cormode and S.~Muthukrishnan.
\newblock An improved data stream summary: The count-min sketch and its
  applications.
\newblock {\em J. Algorithms}, 55(1):58--75, April 2005.

\bibitem{GroupTesting}
Graham Cormode and S~Muthukrishnan.
\newblock What's new: finding significant differences in network data streams.
\newblock {\em IEEE/ACM Transactions on Networking (TON)}, 13(6):1219--1232,
  2005.

\bibitem{DevFlow}
Andrew~R. Curtis, Jeffrey~C. Mogul, Jean Tourrilhes, Praveen Yalagandula,
  Puneet Sharma, and Sujata Banerjee.
\newblock Devoflow: Scaling flow management for high-performance networks.
\newblock In {\em ACM SIGCOMM}, pages 254--265, 2011.

\bibitem{frequent4}
Erik~D. Demaine, Alejandro L\'{o}pez-Ortiz, and J.~Ian Munro.
\newblock Frequency estimation of internet packet streams with limited space.
\newblock In {\em Proc. of the 10th Annual European Symposium on Algorithms},
  ESA, 2002.

\bibitem{PLC}
Xenofontas Dimitropoulos, Paul Hurley, and Andreas Kind.
\newblock Probabilistic lossy counting: An efficient algorithm for finding
  heavy hitters.
\newblock {\em ACM SIGCOMM Comput. Commun. Rev.}, 38, 2008.

\bibitem{LoadBalancing}
Gero Dittmann and Andreas Herkersdorf.
\newblock Network processor load balancing for high-speed links.
\newblock In {\em Proc. of the Int. Symp. on Performance Evaluation of Computer
  and Telecommunication Systems}, volume 735, 2002.

\bibitem{ICEBuckets}
G.~Einziger, B.~Fellman, and Y.~Kassner.
\newblock Independent counter estimation buckets.
\newblock In {\em IEEE INFOCOM}, pages 2560--2568, 2015.

\bibitem{TinyLFU}
G.~Einziger and R.~Friedman.
\newblock {TinyLFU}: A highly efficient cache admission policy.
\newblock In {\em Euromicro PDP}, pages 146--153, 2014.

\bibitem{TinyTable}
Gil Einziger and Roy Friedman.
\newblock Counting with tinytable: Every bit counts!
\newblock In {\em ACM ICDCN 2016}, pages 27:1--27:10.

\bibitem{IntrusionDetection2}
Pedro Garcia-Teodoro, Jesús~E. Díaz-Verdejo, Gabriel Maciá-Fernández, and
  E.~Vázquez.
\newblock Anomaly-based network intrusion detection: Techniques, systems and
  challenges.
\newblock {\em Computers and Security}, pages 18--28, 2009.

\bibitem{Herlihy08hopscotchhashing}
Maurice Herlihy, Nir Shavit, and Moran Tzafrir.
\newblock Hopscotch hashing.
\newblock In {\em 22nd Intl. Symp. on Distributed Computing}, 2008.

\bibitem{FSS}
Nuno Homem and Joao~Paulo Carvalho.
\newblock Finding top-k elements in data streams.
\newblock {\em Inf. Sci.}, 180(24):4958--4974, December 2010.

\bibitem{RankedIndexHashing}
Nan Hua, Haiquan~(Chuck) Zhao, Bill Lin, and Jun Xu.
\newblock Rank-indexed hashing: A compact construction of bloom filters and
  variants.
\newblock In {\em IEEE ICNP 2008}.

\bibitem{ApproximateFairness}
Abdul Kabbani, Mohammad Alizadeh, Masato Yasuda, Rong Pan, and Balaji
  Prabhakar.
\newblock Af-qcn: Approximate fairness with quantized congestion notification
  for multi-tenanted data centers.
\newblock In {\em Proc. of the 18th Symposium on High Performance
  Interconnects}, IEEE HOTI, 2010.

\bibitem{BatchDecrement}
Richard~M. Karp, Scott Shenker, and Christos~H. Papadimitriou.
\newblock A simple algorithm for finding frequent elements in streams and bags.
\newblock {\em ACM Trans. Database Syst.}, 28(1), March 2003.

\bibitem{Kranakis03boundsfor}
E.~Kranakis, P.~Morin, and Y.~Tang.
\newblock Bounds for frequency estimation of packet streams.
\newblock In {\em SIROCCO}, pages 33--42, 2013.

\bibitem{RCS}
T.~Li, S.~Chen, and Y.~Ling.
\newblock Per-flow traffic measurement through randomized counter sharing.
\newblock {\em IEEE/ACM Transactions on Networking}, 20(5):1622--1634, Oct
  2012.

\bibitem{SlidingWindowBF}
Yang Liu, Wenji Chen, and Yong Guan.
\newblock Near-optimal approximate membership query over time-decaying windows.
\newblock In {\em IEEE INFOCOM}, pages 1447--1455, 2013.

\bibitem{CounterBraids}
Yi~Lu, Andrea Montanari, Balaji Prabhakar, Sarang Dharmapurikar, and Abdul
  Kabbani.
\newblock Counter braids: a novel counter architecture for per-flow
  measurement.
\newblock In {\em Proc. of the ACM SIGMETRICS}, 2008.

\bibitem{SpaceSavingIsTheBest2009}
Nishad Manerikar and Themis Palpanas.
\newblock Frequent items in streaming data: An experimental evaluation of the
  state-of-the-art.
\newblock {\em Data Knowl. Eng.}, pages 415--430, 2009.

\bibitem{LC}
Gurmeet~Singh Manku and Rajeev Motwani.
\newblock Approximate frequency counts over data streams.
\newblock In {\em Proc. of VLDB}, 2002.

\bibitem{SpaceSavings}
Ahmed Metwally, Divyakant Agrawal, and Amr~El Abbadi.
\newblock Efficient computation of frequent and top-k elements in data streams.
\newblock In {\em ICDT}, 2005.

\bibitem{SDN1}
Masoud Moshref, Minlan Yu, and Ramesh Govindan.
\newblock Resource/accuracy tradeoffs in software-defined measurement.
\newblock In {\em Proc. of the 2nd ACM SIGCOMM Workshop on Hot Topics in
  Software Defined Networking}, HotSDN, pages 73--78, 2013.

\bibitem{IntrusionDetection}
B.~Mukherjee, L.T. Heberlein, and K.N. Levitt.
\newblock Network intrusion detection.
\newblock {\em Network, IEEE}, 8(3), May 1994.

\bibitem{MLC}
Qiong Rong, Guangxing Zhang, Gaogang Xie, and K.~Salamatian.
\newblock Mnemonic lossy counting: An efficient and accurate heavy-hitters
  identification algorithm.
\newblock In {\em IEEE IPCCC}, pages 255--262, 2010.

\bibitem{SLA}
Joel Sommers, Paul Barford, Nick Duffield, and Amos Ron.
\newblock Accurate and efficient sla compliance monitoring.
\newblock In {\em Proc. of the Conf. on Applications, Technologies,
  Architectures, and Protocols for Computer Communications}, SIGCOMM. ACM,
  2007.

\bibitem{CEDAR}
Erez Tsidon, Iddo Hanniel, and Isaac Keslassy.
\newblock Estimators also need shared values to grow together.
\newblock In {\em INFOCOM}, pages 1889--1897, 2012.

\bibitem{TCPRetransmissions}
Vijay Vasudevan, Amar Phanishayee, Hiral Shah, Elie Krevat, David~G. Andersen,
  Gregory~R. Ganger, Garth~A. Gibson, and Brian Mueller.
\newblock Safe and effective fine-grained tcp retransmissions for datacenter
  communication.
\newblock ACM SIGCOMM, pages 303--314, 2009.

\bibitem{CASE}
Li~Yang, Wu~Hao, Pan Tian, Dai Huichen, Lu~Jianyuan, and Liu Bin.
\newblock Case: Cache-assisted stretchable estimator for high speed per-flow
  measurement.
\newblock In {\em IEEE INFOCOM}, 2016.

\bibitem{HeavyHitters}
Yin Zhang, Sumeet Singh, Subhabrata Sen, Nick Duffield, and Carsten Lund.
\newblock Online identification of hierarchical heavy hitters: Algorithms,
  evaluation, and applications.
\newblock ACM IMC, 2004.

\bibitem{EXAnomalyDetection}
Ying Zhang.
\newblock An adaptive flow counting method for anomaly detection in sdn.
\newblock In {\em ACM CoNEXT}, pages 25--30, 2013.

\bibitem{youtube}
Michael Zink, Kyoungwon Suh, Yu~Gu, and Jim Kurose.
\newblock Watch global, cache local: Youtube network traffic at a campus
  network: measurements and implications.
\newblock In {\em Electronic Imaging 2008}, pages 681805--681805, 2008.

\end{thebibliography}

\end{document}